\documentclass[preprint,12pt]{elsarticle}

\usepackage[T1]{fontenc}
\usepackage[utf8]{inputenc}
\usepackage{amsmath,amssymb,amsfonts,amsthm}
\usepackage{bm}
\usepackage{booktabs,array}
\usepackage{graphicx}
\usepackage{xurl}
\usepackage{placeins}
\usepackage{float}
\usepackage{algorithm}
\usepackage{algorithmic}
\usepackage{tikz}
\usetikzlibrary{arrows.meta,positioning,shapes.geometric}
\usepackage[colorlinks=true,allcolors=blue]{hyperref}
\biboptions{sort&compress}

\graphicspath{{figures/}}

\newtheorem{definition}{Definition}
\newtheorem{assumption}{Assumption}
\newtheorem{theorem}{Theorem}
\newtheorem{lemma}{Lemma}
\newtheorem{proposition}{Proposition}
\newtheorem{corollary}{Corollary}
\newtheorem{remark}{Remark}
\newcommand{\R}{\mathbb{R}}
\newcommand{\K}{\mathcal K}
\newcommand{\KL}{\mathcal{KL}}
\newcommand{\norm}[1]{\left\lVert #1\right\rVert}

\newcommand{\col}{\operatorname{col}}

\journal{}

\begin{document}

\begin{frontmatter}

\title{Input-to-State Stability Certification via Projection Residuals for Koopman Learning Control of Nonlinear Repetitive Systems}

\author[xjtu,tobacco]{Yue Wu\corref{cor1}}
\ead{wuyue0619@stu.xjtu.edu.cn}
\author[xjtu]{Ye Cao}
\author[xjtu]{Jianfu Cao}

\cortext[cor1]{Corresponding author.}

\affiliation[xjtu]{organization={School of Automation, Xi'an Jiaotong University},
            city={Xi'an},
            postcode={710049},
            country={China}}

\affiliation[tobacco]{organization={Xinjiang Cigarette Factory, Hongyun Honghe Tobacco (Group) Co., Ltd.},
            city={Urumqi},
            postcode={830000},
            country={China}}

\begin{abstract}
This paper studies input-to-state stability (ISS) certification for
data-driven Koopman learning control of unknown discrete-time nonlinear
repetitive systems over finite trial horizons. Rather than proposing a new
learning law, we certify when a fixed Koopman-assisted constrained update
yields a practical stability bound for the selected tracking error along the
trial axis. Prediction accuracy alone is insufficient for this purpose: the
selected finite-horizon input-output channel must have a positive margin,
and the unreachable component of the requested output increment must be
accounted for through a projection residual. Thus, a Koopman predictor with
small held-out prediction residuals may still fail the learning-stability
certificate if its selected channel is weak. We formulate the selected
stacked tracking error as the state of a discrete-time learning-axis system
and treat Koopman residuals, reset mismatch, channel uncertainty, projection
residuals, deployment shifts, and numerical tolerances as ISS inputs. The
deterministic result gives a practical ISS estimate from the initial
learning error to an explicit ultimate band. A finite-sample implementation
constructs an episode-level residual bound under a fixed controller and
combines it with reported channel, projection, shift, and numerical margins.
Numerical checks on nonlinear repetitive systems support the predicted
residual-to-band scaling, weak-channel rejection, projection closure, and
ultimate-band coverage.
\end{abstract}

\begin{highlights}
\item Develops a trial-axis input-to-state stability certificate for Koopman learning control.
\item Separates prediction residuals from selected-channel learning certifiability.
\item Quantifies the projection-residual contribution to the ultimate error band.
\item Combines episode-level residual calibration with channel, projection, and shift margins.
\item Evaluates residual scaling, weak-channel rejection, projection closure, and band coverage.
\end{highlights}

\begin{keyword}
Input-to-state stability \sep practical ISS \sep iterative learning control
\sep Koopman operator \sep data-driven control \sep nonlinear repetitive systems
\sep finite-sample certificates
\end{keyword}

\end{frontmatter}

\section{Introduction}

Stability is a central issue in learning control. In a
repetitive task, the controller uses data from previous trials to update the
input for future trials; the same mechanism that improves tracking can also
amplify modelling errors, nonrepetitive disturbances, reset mismatch,
sensor noise, and actuator constraints. Classical iterative learning control
(ILC) therefore developed lifted-domain, two-dimensional repetitive-system,
norm-optimal, robust, and monotonic-convergence analyses that clarify when
tracking errors converge, when learning transients remain bounded, and how
uncertainty affects final tracking performance
~\cite{arimoto1984,moore1993,amann1996,gao2001robust,bristow2006survey}.

The stability question addressed here is different from classical zero-error
convergence or nominal monotonic convergence. In data-driven Koopman
learning control, the finite-dimensional predictor is learned from data and
is used through a selected finite-horizon input-output channel. A small
prediction residual does not imply that this selected channel can generate
the requested output correction under input-update constraints. Moreover,
reset mismatch, nonrepetitive disturbances, channel-identification errors,
deployment shifts, and numerical tolerances do not disappear across trials.
They produce a nonzero learning-axis perturbation budget and therefore lead,
at best, to a certified ultimate band rather than exact convergence to zero.
This separates model fit from learning certifiability: the former concerns
prediction residuals, whereas the latter also requires a selected-channel
margin and request-set reachability.

Input-to-state stability (ISS) is used here because it gives the certificate
needed for learning under persistent perturbations. The effect of the
initial learning error decays through a \(\mathcal{KL}\) term, while
residuals, reset mismatch, channel uncertainty, projection errors, and
deployment shifts enter through explicit gains. Input-output stability of a
finite-horizon plant channel can bound output responses to input
trajectories, but it does not by itself provide a trial-axis decay estimate
from the initial learning error to a computable ultimate band. Internal
stability of the physical plant concerns the within-trial time variable
\(t\), whereas the present certificate concerns the learning recursion
indexed by the trial number \(k\). Since the original ISS and small-gain
developments
~\cite{sontag1989,sontagwang1995,jiangteelpraly1994,jiangwang2001},
this viewpoint has been extended to integral, incremental, large-scale, and
infinite-dimensional settings
~\cite{angelisontagwang2000,angeli2002,dashkovskiy2010,mironchenko2018}.

Koopman operator theory and data-driven finite-dimensional approximations
provide useful predictors for nonlinear systems
~\cite{koopman1931,rowley2009,williams2015,brunton2016,proctor2016,korda2018,mauroy2020}.
Data-driven control and finite-data Koopman analyses likewise show that
uncertainty margins must be part of controller certification, not only part
of model assessment~\cite{nueske2023,dean2020,depersis2020,berberich2021}.
The technical focus is the construction of the aggregate ultimate-band
constant from quantities available before deployment: complete-episode
residual calibration, finite-horizon channel margins, projection residuals,
reset mismatch, deployment shift, and numerical tolerance.

The contributions are as follows. First, we formulate the stability of
Koopman learning control for unknown nonlinear repetitive systems as a
practical ISS problem along the learning-trial axis. The selected stacked
tracking error is the state, while Koopman residuals, reset mismatch,
channel uncertainty, projection residuals, deployment shifts, and numerical
tolerances are treated as input channels. Second, we identify projection
residuals and selected-channel margins as necessary certificate objects for
Koopman-assisted learning control. This shows that prediction accuracy alone
is not sufficient: a learned predictor with small residual can still be
non-certifiable if the selected finite-horizon input-output channel is weak
or if the requested output increment is outside the constrained reachable
set. Third, we provide a computable certificate budget that combines
complete-episode residual calibration, finite-horizon channel and reset
margins, projection residuals, and implementation margins. The resulting
practical ISS estimate gives an explicit certified ultimate band for the
learning error. Fourth, numerical studies audit the certificate components
predicted by the theory, including residual-to-band scaling,
selected-channel rejection, projection closure, and ultimate-band coverage.

\section{System Class and Learning-Axis Stability Formulation}

We consider a nonlinear repetitive system operated over a finite within-trial
horizon \(t=0,\ldots,N-1\) and a learning-trial index \(k=0,1,\ldots\):
\begin{equation}
x_k(t+1)=f(x_k(t),u_k(t),d_k(t)),\qquad
y_k(t)=h(x_k(t)).
\label{eq:plant}
\end{equation}
The functions \(f\) and \(h\) are unknown. The reference \(y_r(t)\) is
repeated, but the reset \(x_k(0)\), disturbance \(d_k\), sensor noise, and
actuator mismatch may vary from trial to trial.

\begin{assumption}[Certified nonlinear repetitive task class]
We consider discrete-time nonlinear systems operated repeatedly over a
finite horizon \(t=0,\ldots,N-1\) and indexed by the trial number \(k\).
The maps \(f\) and \(h\) are unknown, but the task protocol is resettable
and the reference trajectory is repeated. The state, input, output,
disturbance, reset mismatch, and actuator update are restricted to certified
bounded sets on which solutions are well defined and the chosen lifting map
\(\psi\) is evaluable. The within-trial plant is not assumed to be globally
ISS. The certificate developed below concerns the selected stacked tracking
error along the learning-trial axis under the declared finite-horizon
protocol.
\end{assumption}

Thus, the method applies to finite-horizon nonlinear repetitive tasks, such
as robot path repetition, batch operation, mechatronic servo tasks, and
other resettable processes for which input-output trajectories can be
collected over repeated trials and a frozen Koopman finite-horizon channel
can be calibrated before deployment.

Let
\(E_k\in\R^{Np}\) denote the stacked tracking error and let \(S_H\) select
the rows used for certification. The controlled trial-axis state is
\begin{equation}
\bar E_k=S_HE_k-S_HE_k^\star,
\label{eq:selected-error}
\end{equation}
where \(E_k^\star\) is a prescribed transient error profile used to absorb
bounded initial mismatch during the initial segment of each trial.

The variable \(\bar E_k\) is a learning-axis state rather than a physical
plant state. Exact or monotone convergence to zero would require ideal
resets, exact models, and feasible inverse updates. Under Koopman residuals,
nonrepetitive disturbances, finite-data channel errors, actuator constraints,
and numerical tolerances, the relevant question is instead whether
\(\{\bar E_k\}\) enters a computable ultimate band as the trial index \(k\)
increases. Input-output stability of the finite-horizon plant channel bounds
output responses to input trajectories, whereas internal stability concerns
the within-trial dynamics \(x_k(t)\). We therefore use practical ISS on the
trial axis. The terms learning axis and trial axis both refer to the discrete
index \(k\) of repeated task executions.

\begin{definition}[Trial-axis practical ISS]
The selected learning-error system is practically ISS if there exist
\(\beta\in\mathcal{KL}\), \(\gamma\in\mathcal{K}\), and \(c\geq0\) such that, for every
initial selected error and every admissible perturbation sequence \(v\),
\begin{equation}
\norm{\bar E_k}_\infty
\leq
\beta(\norm{\bar E_0}_\infty,k)
+\gamma\!\left(\sup_{0\leq j<k}\norm{v_j}\right)+c
\label{eq:piss-def}
\end{equation}
for all certified trials \(k\). If \(c=0\), the system is ISS on the
certified deployment grid.
\end{definition}

The norm \(\norm{v_k}\) in \eqref{eq:piss-def} is any fixed product norm
compatible with the componentwise bounds used below. In the deterministic
certificate we use the scalar budget norm
\begin{align*}
\norm{v_k}_{\rm ISS}={}&
\norm{\omega_k}_\infty+\norm{p_k}_\infty
+\epsilon_G\norm{\Delta U_k}_\infty\\
&+\epsilon_O\norm{\Delta z_{k,0}}_\infty
+h_{{\rm shift},k}+\norm{\omega_{{\rm num},k}}_\infty .
\end{align*}

The perturbation \(v_k\) collects the quantities that can be certified or
calibrated in the construction below:
\begin{equation}
v_k=\col(\omega_k,p_k,\Delta G_k\Delta U_k,\Delta O_k\Delta z_{k,0},
h_{{\rm shift},k},\omega_{{\rm num},k}).
\label{eq:iss-input}
\end{equation}
Here \(\omega_k\) is the selected model residual, \(p_k\) is the
projection residual of the constrained input update, \(\Delta G_k\) and
\(\Delta O_k\) are learned-channel and observability perturbations,
\(\Delta z_{k,0}\) is the lifted reset difference, and the last two terms
cover deployment-shift and numerical effects. Table~\ref{tab:iss-inputs}
summarizes the role of each component.

\begin{table}[t]
\caption{Input-to-State Stability Inputs in the Trial-Axis Learning Error System}
\label{tab:iss-inputs}
\centering
\scriptsize
\setlength{\tabcolsep}{2.5pt}
\begin{tabular}{p{0.20\linewidth}p{0.29\linewidth}p{0.33\linewidth}}
\toprule
Input component & Source & Certificate role\\
\midrule
\(\omega_k\) & Koopman and output residual, nonrepetitive disturbance & Additive selected-output perturbation\\
\(p_k\) & Constrained or regularized input solve & Unreachable requested output increment\\
\(\Delta G_k\Delta U_k\) & Learned-channel error & Gain from channel perturbation to output error\\
\(\Delta O_k\Delta z_{k,0}\) & Reset/observability perturbation & Initial-shift contribution\\
\(h_{{\rm shift},k}\) & Calibration-to-deployment mismatch & Distribution-shift margin\\
\(\omega_{{\rm num},k}\) & Solver and truncation tolerance & Numerical robustness margin\\
\bottomrule
\end{tabular}
\end{table}

\section{Koopman Finite-Horizon Channel and Projection Residual}

Let \(z=\psi(x)\in\R^q\) be a vector of observables. A frozen extended
dynamic mode decomposition with control (EDMDc) model is fitted before
deployment,
\begin{equation}
z(t+1)=\widehat A z(t)+\widehat B u(t)+\xi(t),\qquad
y(t)=\widehat C z(t)+\nu(t).
\label{eq:edmdc}
\end{equation}
Unrolling \eqref{eq:edmdc} over a finite horizon gives a stacked
input-output relation. After selecting the certified rows,
\begin{equation}
\Delta \bar Y_k
=
O_{\star,H}\Delta z_{k,0}
+G_{\star,H}\Delta U_k+\omega_k ,
\label{eq:true-channel}
\end{equation}
where \(\Delta U_k=U_k-U_{k-1}\), \(\Delta \bar Y_k=S_H(Y_k-Y_{k-1})\),
and \(G_{\star,H}\) is the true selected finite-horizon
Koopman--Markov channel. The controller uses frozen learned quantities
\(\hat O_H\) and \(\hat G_H\).

The constrained update is used as a fixed Koopman-assisted learning step;
the contribution lies in the associated stability certificate, namely the
selected-channel margin, projection residual, and computable input-to-state
stability budget.

For a diagonal learning matrix \(\Lambda\) satisfying
\(\norm{\Lambda}_\infty\leq\lambda<1\), the requested selected output
increment is
\begin{equation}
b_k=S_H(E_{k-1}^\star-E_k^\star)
-(I-\Lambda)\bar E_{k-1}
+\hat O_H\Delta z_{k,0}.
\label{eq:request}
\end{equation}
The admissible input update is obtained from the constrained least-squares
problem
\begin{equation}
\Delta U_k\in
\arg\min_{\Delta U\in{\mathcal U}_\Delta}
\norm{\hat G_H\Delta U-b_k}_2^2+\mu\norm{\Delta U}_2^2 ,
\label{eq:controller}
\end{equation}
and the residual
\begin{equation}
p_k=b_k-\hat G_H\Delta U_k
\label{eq:projection-residual}
\end{equation}
is retained in the stability proof. Unlike a nominal learned inverse, an
increment outside the reachable set \(\hat G_H{\mathcal U}_\Delta\)
contributes its unreachable component directly as an ISS input.

\begin{proposition}[Projection-residual obstruction]
Let \(G=\hat G_H\), let \({\mathcal B}\) be a certified request set
containing all \(b_k\), and let \({\mathcal U}_\Delta\) be the admissible
input-update set. A uniform zero-projection learning step over
\({\mathcal B}\) is possible only if
\begin{equation}
{\mathcal B}\subseteq G{\mathcal U}_\Delta .
\label{eq:request-inclusion}
\end{equation}
More generally, any uniform ISS budget over \({\mathcal B}\) must contain
at least
\begin{equation}
\bar p_{\mathcal B}
=
\sup_{b\in{\mathcal B}}
\inf_{\Delta U\in{\mathcal U}_\Delta}
\norm{b-G\Delta U}_\infty .
\label{eq:projection-lower-bound}
\end{equation}
The projection residual thus equals the distance from the requested output
increment to the constrained reachable set of the learned channel,
reflecting a structural limitation on actuation rather than numerical error.
\end{proposition}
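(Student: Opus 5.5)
The argument is elementary and rests on the definition of the projection residual \eqref{eq:projection-residual}. For any admissible update \(\Delta U_k\in{\mathcal U}_\Delta\) that the controller \eqref{eq:controller} returns, the residual satisfies \(p_k=b_k-G\Delta U_k\). This gives the pointwise lower bound
\[
\norm{p_k}_\infty\geq d_\infty(b_k):=\inf_{\Delta U\in{\mathcal U}_\Delta}\norm{b_k-G\Delta U}_\infty .
\]
The bound holds regardless of the solver, the regularization \(\mu\), or the fact that \eqref{eq:controller} uses the \(2\)-norm. The infimum is taken over the same admissible set, so no optimizer does better in \(\infty\)-norm than the distance itself. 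Both claims of the proposition follow from this inequality.

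\textbf{Necessity of the inclusion.} Say the learning step is zero-projection uniformly over \({\mathcal B}\) if, for every \(b\in{\mathcal B}\) that can arise as a request \(b_k\), the controller returns some \(\Delta U\in{\mathcal U}_\Delta\) with \(p_k=0\). Then \(b=G\Delta U\in G{\mathcal U}_\Delta\). Since \(b\) was arbitrary in \({\mathcal B}\), we get \eqref{eq:request-inclusion}. Conversely, if some \(b\in{\mathcal B}\setminus G{\mathcal U}_\Delta\) exists, every admissible update leaves \(p_k\neq0\). Here I will remark that if \({\mathcal U}_\Delta\) is closed and bounded, as the certified bounded sets of the task-class assumption suggest, then \(G{\mathcal U}_\Delta\) is compact and \(d_\infty(b)>0\) strictly.

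\textbf{Lower bound on any uniform budget.} An ISS budget that is uniform over \({\mathcal B}\) must dominate \(\sup_k\norm{p_k}_\infty\) for every admissible request sequence with values in \({\mathcal B}\). This is because \(p_k\) enters \(\norm{v_k}_{\rm ISS}\) additively through the term \(\norm{p_k}_\infty\). Taking requests \(b_k\) that approach the supremum in \eqref{eq:projection-lower-bound} and applying the pointwise bound gives \(\sup_k\norm{p_k}_\infty\geq\bar p_{\mathcal B}-\varepsilon\) for every \(\varepsilon>0\). Hence the budget is at least \(\bar p_{\mathcal B}\).

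\textbf{Expected obstacle.} The main subtlety is the realizability of the worst-case requests. The certified set \({\mathcal B}\) is declared to contain all \(b_k\), but a given trajectory need not visit every element. I will therefore read ``uniform over \({\mathcal B}\)'' as quantifying over all requests in \({\mathcal B}\), i.e.\ over all admissible error, reset and transient-profile data in \eqref{eq:request}, and state this interpretation explicitly. The closing sentence of the proposition, that the residual is the distance to the reachable set, is then the identification \(\inf\norm{p_k}_\infty=d_\infty(b_k)\). Equality is attained when the solver returns an \(\infty\)-norm projection, and it becomes an inequality for the regularized least-squares step. I will note this caveat rather than claim exact equality for \eqref{eq:controller}.
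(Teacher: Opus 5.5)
Your proposal is correct and follows essentially the paper's proof: both rest on \(p_k=b_k-G\Delta U_k\) with \(\Delta U_k\in\mathcal U_\Delta\), so a zero projection residual forces \(b\in G\mathcal U_\Delta\), and because \(p_k\) is an additive ISS input, a budget that is uniform over \(\mathcal B\) must cover \(\bar p_{\mathcal B}\). The one difference is where \(p_k\) is isolated: the paper passes to the nominal case (no Koopman residual, reset or channel perturbation), where \(p_k\) is the only additive term in the recursion, whereas you read it off the \(\norm{p_k}_\infty\) summand of \(\norm{v_k}_{\rm ISS}\). Your explicit pointwise bound \(\norm{p_k}_\infty\geq d_\infty(b_k)\) and your caveat that ``equals'' is exact only for an \(\infty\)-norm projection, not for the regularized \(2\)-norm step, are more careful than the paper's sketch.
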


\begin{proof}
By definition, \(p_k=b_k-G\Delta U_k\). If some
\(b\in{\mathcal B}\) is not in \(G{\mathcal U}_\Delta\), no admissible
\(\Delta U\) can make \(p_k=0\) for that request. In the nominal case with
no Koopman residual, no reset perturbation, and no channel perturbation, the
selected recursion contains the additive term \(p_k\). Therefore a uniform
ISS budget that applies to all \(b\in{\mathcal B}\) must be at least the
worst-case distance in \eqref{eq:projection-lower-bound}.
\end{proof}

\begin{corollary}[Projection-induced ultimate-band contribution]
Consider any uniform certificate of the form
\[
\norm{\bar E_k}_\infty
\leq \lambda\norm{\bar E_{k-1}}_\infty+\bar w,\qquad
0\leq\lambda<1,
\]
over a request set \({\mathcal B}\) and an admissible update set
\({\mathcal U}_\Delta\). If
\[
\bar p_{\mathcal B}
=
\sup_{b\in{\mathcal B}}
\inf_{\Delta U\in{\mathcal U}_\Delta}
\norm{b-G\Delta U}_\infty ,
\]
then the certified input budget must satisfy
\(\bar w\geq\bar p_{\mathcal B}\). Consequently, within any uniform
worst-case ultimate-band certificate over \({\mathcal B}\) based on this
recursion, the projection obstruction contributes at least
\(\bar p_{\mathcal B}/(1-\lambda)\).
\end{corollary}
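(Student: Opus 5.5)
The plan is to derive both claims from the Projection-residual obstruction proposition, using a nominal worst-case scenario and then iterating the scalar recursion. We work in the nominal setting: zero Koopman residual, zero reset and channel perturbation, and zero shift and numerical terms. In this setting the selected recursion reduces to an additive dependence on \(p_k\). To make this explicit, substitute the request \eqref{eq:request} and the residual \eqref{eq:projection-residual} into \eqref{eq:true-channel} with \(G_{\star,H}=\hat G_H\) and \(O_{\star,H}=\hat O_H\). Since \(\bar E_k-\bar E_{k-1}=-\Delta\bar Y_k-S_H(E_k^\star-E_{k-1}^\star)\), this gives
\begin{equation*}
\bar E_k=\Lambda\bar E_{k-1}+p_k .
\end{equation*}
The sign should be checked against the paper's convention \(E=Y_r-Y\). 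Either way, \(p_k\) enters additively with unit gain.

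\textbf{Step 1: \(\bar w\geq\bar p_{\mathcal B}\).} Fix \(\varepsilon>0\) and choose \(b\in\mathcal B\) whose distance to \(G\mathcal U_\Delta\) exceeds \(\bar p_{\mathcal B}-\varepsilon\). Every admissible \(\Delta U_k\), in particular the minimizer of \eqref{eq:controller}, then has \(\norm{p_k}_\infty\geq\bar p_{\mathcal B}-\varepsilon\). Next choose the previous state so that the request equals \(b\). The cleanest choice is \(\bar E_{k-1}=0\) with the transient-profile and reset terms producing \(b\); since \(\mathcal B\) is defined as a set of certified requests, this realizability is exactly what "uniform over \(\mathcal B\)" means. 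With \(\bar E_{k-1}=0\), the nominal recursion gives \(\norm{\bar E_k}_\infty=\norm{p_k}_\infty\geq\bar p_{\mathcal B}-\varepsilon\). The certificate, however, gives \(\norm{\bar E_k}_\infty\leq\lambda\cdot0+\bar w\). Letting \(\varepsilon\to0\) yields \(\bar w\geq\bar p_{\mathcal B}\).

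\textbf{Step 2: the ultimate-band contribution.} Iterate the certificate to get
\[
\norm{\bar E_k}_\infty\leq\lambda^k\norm{\bar E_0}_\infty+\bar w\sum_{j<k}\lambda^j .
\]
The ultimate band is therefore \(\bar w/(1-\lambda)\). Because \(\bar w\) must contain the projection term at least at the level \(\bar p_{\mathcal B}\), the part of the band attributable to projection is at least \(\bar p_{\mathcal B}/(1-\lambda)\). Note that this is a statement about the certified band produced by this recursion, not a claim that trajectories actually attain it, and the corollary is phrased that way.

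\textbf{Main obstacle.} The delicate point is Step 1's realizability: we must exhibit a nominal trial history in which the worst request occurs while the preceding error is small enough not to mask \(p_k\). If \(\bar E_{k-1}=0\) cannot be arranged, we fall back to any \(\bar E_{k-1}\) generating \(b\) and use the same sup argument as in the proposition. That argument treats the budget \(\bar w\) as the uniform bound on the additive input, so \(\bar w\geq\sup\norm{p_k}_\infty\geq\bar p_{\mathcal B}\) follows directly from the proposition's lower bound \eqref{eq:projection-lower-bound}.
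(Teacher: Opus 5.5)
Your argument is correct and is essentially the paper's. The paper proves this corollary in two lines: it cites Proposition~1 (in the nominal case the selected recursion carries \(p_k\) additively, so any uniform budget over \(\mathcal B\) is at least \(\bar p_{\mathcal B}\)) and substitutes this into the steady-state term \(\bar w/(1-\lambda)\), which is exactly your fallback plus Step~2. Your explicit witness with \(\bar E_{k-1}=0\) is a sharpening the paper does not attempt, but, as you suspected, it generally cannot reach the worst requests of \(\mathcal B(R,\bar Z_0)\). Their \((I-\Lambda)\mathbb B_\infty(R)\) component arises only from a nonzero \(\bar E_{k-1}\), and there \(\Lambda\bar E_{k-1}\) and \(p_k\) may partially cancel in the \(\infty\)-norm. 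So the step that actually carries the proof is the paper's reading of \(\bar w\) as a uniform bound on the additive input itself.
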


\begin{proof}
Proposition~1 gives \(\bar w\geq\bar p_{\mathcal B}\) for any uniform
budget over \({\mathcal B}\). Substituting this lower bound into the
steady-state term \(\bar w/(1-\lambda)\) yields the stated
projection-induced contribution.
\end{proof}

\begin{proposition}[Projection budget and request closure]
Let \(G=\hat G_H\) and let \({\mathcal B}\) be a certified request set
containing all \(b_k\). For the regularized controller
\eqref{eq:controller}, a computable projection budget is
\begin{equation}
\bar p_{\mu,{\mathcal U}}
=\sup_{b\in{\mathcal B}}
\min_{\Delta U\in{\mathcal U}_\Delta}
\left(\norm{G\Delta U-b}_2^2+\mu\norm{\Delta U}_2^2\right)^{1/2}.
\label{eq:p-budget}
\end{equation}
Then \(\norm{p_k}_\infty\leq \bar p_{\mu,{\mathcal U}}\). If
\({\mathcal U}_\Delta=\R^{Nm}\), \(\mu=0\), and \(G\) has full row rank
with \(\sigma_{\min}(G)\geq\gamma_G>0\), then \(p_k=0\) and
\(\norm{\Delta U_k}_2\leq\norm{b_k}_2/\gamma_G\).
\end{proposition}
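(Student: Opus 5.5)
The plan is to compare the residual \(p_k\) produced by the minimizer \(\Delta U_k\) of \eqref{eq:controller} with the optimal value of that same problem, and then use norm equivalence. Write
\[
J_k(\Delta U)=\norm{G\Delta U-b_k}_2^2+\mu\norm{\Delta U}_2^2 .
\]
Since \(\Delta U_k\) attains \(\min_{\Delta U\in{\mathcal U}_\Delta}J_k(\Delta U)\), we have
\[
\norm{p_k}_2^2=\norm{b_k-G\Delta U_k}_2^2\leq J_k(\Delta U_k)=\min_{\Delta U\in{\mathcal U}_\Delta}J_k(\Delta U),
\]
where the inequality holds because \(\mu\norm{\Delta U_k}_2^2\geq0\). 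Because \(b_k\in{\mathcal B}\), the right-hand side is at most the supremum in \eqref{eq:p-budget} after taking square roots. Combined with \(\norm{p_k}_\infty\leq\norm{p_k}_2\), this gives \(\norm{p_k}_\infty\leq\bar p_{\mu,{\mathcal U}}\).

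Before this comparison, existence of the minimizer has to be checked so that ``\(\min\)'' is justified. I will take \({\mathcal U}_\Delta\) closed and nonempty, as implicit in the certified bounded sets of the standing assumption. If \(\mu>0\), the objective is coercive, so a minimizer exists. If \(\mu=0\), the set \(G{\mathcal U}_\Delta\) is closed whenever \({\mathcal U}_\Delta\) is compact (bounded) or is a polyhedron. I expect this to be the only delicate point: for a general closed convex but unbounded \({\mathcal U}_\Delta\) with \(\mu=0\), the infimum may not be attained. I would handle that case by reading \eqref{eq:p-budget} with \(\inf\) and noting that \eqref{eq:controller} is assumed solvable, so \(\Delta U_k\) exists by hypothesis.

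For the closure statement, take \({\mathcal U}_\Delta=\R^{Nm}\) and \(\mu=0\). Full row rank of \(G\) means \(GG^\top\) is invertible, so \(\Delta U^\dagger=G^\top(GG^\top)^{-1}b_k\) satisfies \(G\Delta U^\dagger=b_k\). The optimal value is therefore \(0\), every minimizer satisfies \(G\Delta U_k=b_k\), and hence \(p_k=0\). For the norm bound I take \(\Delta U_k\) to be the minimum-norm solution \(\Delta U^\dagger=G^\dagger b_k\); if the argmin is not a singleton, the minimum-norm selection must be part of the controller specification, and I would state this explicitly. The singular values of \(G^\dagger\) are the reciprocals of the nonzero singular values of \(G\). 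Here \(\sigma_{\min}(G)\) means the smallest of the \(Np\)-many singular values, which are all nonzero by full row rank, so \(\norm{G^\dagger}_2=1/\sigma_{\min}(G)\leq1/\gamma_G\). This yields \(\norm{\Delta U_k}_2\leq\norm{b_k}_2/\gamma_G\).
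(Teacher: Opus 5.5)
Your argument is correct and matches the paper's proof: bound $\norm{p_k}_2^2$ by the optimal value of \eqref{eq:controller}, dominate that value by the supremum over $\mathcal B$ that defines $\bar p_{\mu,\mathcal U}^2$, and pass to the $\infty$-norm; in the unconstrained full-row-rank case, use $G^\dagger=G^\top(GG^\top)^{-1}$ with $\norm{G^\dagger}_2=1/\sigma_{\min}(G)$. Your explicit caveats make precise two assumptions the paper's proof leaves tacit: the bound on $\norm{\Delta U_k}_2$ requires the minimum-norm selection from a generally non-singleton argmin, and attainment of the minimum needs closedness of ${\mathcal U}_\Delta$ (or of $G{\mathcal U}_\Delta$ when $\mu=0$).
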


\begin{proof}
For every certified request \(b_k\in{\mathcal B}\), the optimizer of
\eqref{eq:controller} attains an objective value no larger than the
right-hand side of \eqref{eq:p-budget}. Since
\(\norm{p_k}_\infty\leq\norm{p_k}_2\), the claimed projection bound follows.
If \({\mathcal U}_\Delta=\R^{Nm}\), \(\mu=0\), and \(G\) has full row rank,
then \(G\Delta U=b_k\) is feasible for every selected request. The
minimum-norm solution satisfies
\(\Delta U_k=G^\dagger b_k\), hence \(p_k=0\) and
\(\norm{\Delta U_k}_2\leq\norm{G^\dagger}_2\norm{b_k}_2
\leq\norm{b_k}_2/\gamma_G\).
\end{proof}

\begin{proposition}[Weak-channel non-certifiability]
Suppose the learned selected channel and its spectral perturbation radius
satisfy
\begin{equation}
\sigma_{\min}(\hat G_H)-\epsilon_{G,2}\leq 0 .
\label{eq:weak-channel-test}
\end{equation}
Then the data cannot certify a positive lower input-output stability (IOS)
margin for the true selected channel \(G_{\star,H}\). Consequently, no uniform
inverse-gain or projection-closure certificate of the form used below can be
issued from these data.
\end{proposition}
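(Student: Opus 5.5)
The argument is a worst-case construction inside the uncertainty set that the data certify. The data only guarantee that the true channel lies in the spectral ball
\[
\mathcal G=\{G:\norm{G-\hat G_H}_2\leq\epsilon_{G,2}\}.
\]
Any certificate issued from these data must therefore hold uniformly over every \(G\in\mathcal G\). A positive lower IOS margin means a constant \(\gamma>0\) such that \(\sigma_{\min}(G)\geq\gamma\) for all admissible true channels. Equivalently, \(\norm{G\Delta U}_2\geq\gamma\norm{\Delta U}_2\) in the wide/square orientation used in Proposition~3, where \(\sigma_{\min}\) denotes the smallest of the \(\min(\text{rows},\text{cols})\) singular values. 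So it suffices to show that \(\inf_{G\in\mathcal G}\sigma_{\min}(G)=0\) under \eqref{eq:weak-channel-test}.

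First, I would recall the Weyl-type bound \(|\sigma_{\min}(G)-\sigma_{\min}(\hat G_H)|\leq\norm{G-\hat G_H}_2\). This shows that \(\sigma_{\min}(\hat G_H)-\epsilon_{G,2}\) is the best lower bound available from the ball, and the test says it is nonpositive. To show that the lower bound is actually attained at zero, I would construct an explicit rank-deficient member of \(\mathcal G\). Take the SVD \(\hat G_H=\sum_i\sigma_i u_i v_i^\top\), and let \(\sigma_r=\sigma_{\min}(\hat G_H)\) with singular pair \((u_r,v_r)\). Define
\[
G_0=\hat G_H-\sigma_r u_r v_r^\top .
\]
Then \(\norm{G_0-\hat G_H}_2=\sigma_r\leq\epsilon_{G,2}\), so \(G_0\in\mathcal G\), and \(\sigma_{\min}(G_0)=0\). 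Hence \(G_0\) is indistinguishable from the true channel on the basis of the data, yet it has no positive margin. This proves the first claim.

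For the consequence, I would note that both certificate types reduce to a positive margin. The inverse-gain bound \(\norm{\Delta U_k}_2\leq\norm{b_k}_2/\gamma_G\) from Proposition~3 requires \(\gamma_G>0\) uniformly over \(\mathcal G\). Projection closure requires \(\mathcal B\subseteq G\,\mathcal U_\Delta\) for every admissible \(G\). At \(G_0\) the range loses the direction \(u_r\). By Proposition~1, any certified request with a nonzero \(u_r\)-component then leaves a nonzero projection residual, or, when \(\mu=0\) and \(\mathcal U_\Delta\) is unbounded, the required update norm is unbounded. In either case no uniform inverse-gain or zero-projection certificate holds over \(\mathcal G\).

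The main obstacle is making the phrase "the data cannot certify" precise. I would fix the convention that a data-based certificate is one valid for every channel consistent with the perturbation radius, so the conclusion becomes the existence of \(G_0\) as above. A second subtlety is the boundary case \(\sigma_{\min}=\epsilon_{G,2}\). There \(G_0\) lies exactly on the ball boundary, so the margin is zero rather than negative, which still precludes a positive \(\gamma\). I would also state that the closure conclusion needs \(\mathcal B\) to contain a request that is not orthogonal to \(u_r\), which holds for any request set with nonempty interior.
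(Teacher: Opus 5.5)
Your proof of the first claim is correct and sharper than the paper's. The paper only applies the Weyl bound \(\sigma_{\min}(G_{\star,H})\geq\sigma_{\min}(\hat G_H)-\norm{\hat G_H-G_{\star,H}}_2\) and notes that, under the test, this lower bound is nonpositive. Strictly, that shows only that the Weyl-based certificate fails, and it tacitly treats this bound as the best the data can supply. Your rank-one construction \(G_0=\hat G_H-\sigma_r u_r v_r^\top\) proves that tacit step. The infimum of \(\sigma_{\min}\) over the spectral ball is \(0\) and is attained, so \emph{no} argument, not just Weyl, can extract a positive margin from that ball. That is what ``cannot certify'' literally asserts. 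The price is that the impossibility holds only under your convention that the spectral ball is the whole information set. If the admissible set is also cut by the \(\infty\)-norm radius \(\epsilon_G\) of the certified deployment event, or by the block lower-triangular Toeplitz structure of a Markov channel, then \(G_0\) need not remain admissible. In that case only the paper's weaker conclusion survives.

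For the ``consequently'' part, the paper's reasoning is essentially definitional. The certificate form takes \(\sigma_{\min}(\hat G_H)-\epsilon_{G,2}\geq\gamma_G>0\) as a hypothesis: the channel-margin condition in the certified deployment event, with the matching gate in Algorithm~1. The test negates this directly. Your justification misattributes what the paper's objects require. The inverse-gain bound in the projection-budget proposition, the residual \(p_k\), and the closure test \(\lambda R+P(R,\bar Z_0)+\bar\xi\leq R\) are all formed with the learned channel \(\hat G_H\). They need only \(\sigma_{\min}(\hat G_H)>0\), which can hold while the test fires: the weak-channel audit has \(\sigma_{\min}(\hat G_H)=0.010\) with a negative margin. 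The true-channel mismatch enters separately, through \(\epsilon_G\bar U_\Delta\). Your inverse gain and closure that hold robustly over \(\mathcal G\) are legitimate strengthenings, but present them as such, not as what those propositions require.

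One further correction: for a strictly wide full-row-rank \(G\), \(\sigma_{\min}(G)\geq\gamma\) is not equivalent to \(\norm{G\Delta U}_2\geq\gamma\norm{\Delta U}_2\), which fails on the kernel. The right form is \(\norm{G^\top s}_2\geq\gamma\norm{s}_2\), i.e.\ \(\norm{G^\dagger}_2\leq 1/\gamma\). That is what the inverse-gain bound actually uses.
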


\begin{proof}
The singular-value perturbation inequality gives
\[
\sigma_{\min}(G_{\star,H})
\geq
\sigma_{\min}(\hat G_H)-\norm{\hat G_H-G_{\star,H}}_2 .
\]
If the certified perturbation radius is \(\epsilon_{G,2}\) and
\eqref{eq:weak-channel-test} holds, the lower bound supplied by the data is
nonpositive. Hence the data do not certify a positive channel margin
\(\gamma_G>0\), which is required for the inverse-gain and request-closure
steps.
\end{proof}

\begin{definition}[Deterministic request set]
Let \({\mathbb B}_\infty(r)\) denote an infinity-norm ball of the dimension
implied by context.
For a certified error radius \(R\) and reset-lift radius \(\bar Z_0\), define
\begin{align}
{\mathcal B}(R,\bar Z_0)
={}&{\mathcal B}_\star
\oplus (I-\Lambda){\mathbb B}_\infty(R)
\oplus \hat O_H{\mathbb B}_\infty(\bar Z_0),
\label{eq:det-request-set}\\
{\mathcal B}_\star
={}&\operatorname{co}\{S_H(E_{k-1}^\star-E_k^\star):k\geq1\}.
\nonumber
\end{align}
\end{definition}

\begin{lemma}[Request-set containment]
If \(\norm{\bar E_{k-1}}_\infty\leq R\) and
\(\norm{\Delta z_{k,0}}_\infty\leq\bar Z_0\), then the request
\eqref{eq:request} belongs to \({\mathcal B}(R,\bar Z_0)\).
\end{lemma}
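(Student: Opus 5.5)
The plan is to decompose the request \eqref{eq:request} into its three summands and show that each lies in the corresponding summand of the Minkowski sum \eqref{eq:det-request-set}. Membership of $b_k$ in ${\mathcal B}(R,\bar Z_0)$ then follows from the definition of $\oplus$: if $a\in A$, $c\in C$ and $d\in D$, then $a+c+d\in A\oplus C\oplus D$.

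\emph{First term.} Take $a_k=S_H(E_{k-1}^\star-E_k^\star)$ with $k\geq1$. This vector is one of the generators of ${\mathcal B}_\star$. It therefore lies in its convex hull, since every set is contained in its convex hull.

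\emph{Second term.} The middle summand of \eqref{eq:request} is $-(I-\Lambda)\bar E_{k-1}=(I-\Lambda)(-\bar E_{k-1})$. The hypothesis gives $\norm{\bar E_{k-1}}_\infty\leq R$. The ball ${\mathbb B}_\infty(R)$ is symmetric, so $\norm{-\bar E_{k-1}}_\infty\leq R$ as well, and hence $-\bar E_{k-1}\in{\mathbb B}_\infty(R)$. It follows that $c_k=(I-\Lambda)(-\bar E_{k-1})\in(I-\Lambda){\mathbb B}_\infty(R)$.

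\emph{Third term.} The hypothesis $\norm{\Delta z_{k,0}}_\infty\leq\bar Z_0$ gives $\Delta z_{k,0}\in{\mathbb B}_\infty(\bar Z_0)$, with the ball taken in $\R^q$ as "implied by context". Hence $d_k=\hat O_H\Delta z_{k,0}\in\hat O_H{\mathbb B}_\infty(\bar Z_0)$.

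There is no real obstacle here. The only points needing care are the sign in the middle term, which symmetry of the ball handles, and the fact that $a_k$ is indexed by $k\geq1$, which matches the index set in the definition of ${\mathcal B}_\star$. In particular, none of the channel or projection results above (Propositions 1--3) is needed.
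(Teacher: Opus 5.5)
Your proposal is correct and follows the same route as the paper: place each of the three summands of \eqref{eq:request} in the matching component of the Minkowski sum \eqref{eq:det-request-set}, then conclude by the definition of $\oplus$. Your explicit use of the symmetry of ${\mathbb B}_\infty(R)$ to handle the sign of $-(I-\Lambda)\bar E_{k-1}$ fills in a detail that the paper's one-line proof leaves implicit.
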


\begin{proof}
The three terms in \eqref{eq:request} belong respectively to
\({\mathcal B}_\star\), \((I-\Lambda){\mathbb B}_\infty(R)\), and
\(\hat O_H{\mathbb B}_\infty(\bar Z_0)\). Their sum is therefore in the
Minkowski sum \eqref{eq:det-request-set}.
\end{proof}

\begin{proposition}[Certified request closure]
Let \(P(R,\bar Z_0)\) be the worst projection budget
\eqref{eq:p-budget} over \({\mathcal B}(R,\bar Z_0)\), and let
\(\bar\xi\) collect all non-projection perturbation terms. The certified
ball \(\{\bar E:\norm{\bar E}_\infty\leq R\}\) is invariant for the
one-step learning recursion if
\begin{equation}
\lambda R+P(R,\bar Z_0)+\bar\xi\leq R.
\label{eq:request-closure}
\end{equation}
\end{proposition}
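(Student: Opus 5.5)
The plan is a one-step induction on the trial index. We derive the selected error recursion, split it into the contraction term, the projection residual and the remaining perturbations, and then use the earlier propositions and lemma to bound each piece.

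\textbf{Step 1: the error recursion.} By \eqref{eq:selected-error}, $\bar E_k-\bar E_{k-1}=S_H(E_k-E_{k-1})-S_H(E_k^\star-E_{k-1}^\star)$. Since $E_k=Y_r-Y_k$ for a repeated reference, the first term equals $-\Delta\bar Y_k$. Substituting the true channel \eqref{eq:true-channel} with $G_{\star,H}=\hat G_H-\Delta G_k$ and $O_{\star,H}=\hat O_H-\Delta O_k$, and writing $\hat G_H\Delta U_k=b_k-p_k$ from \eqref{eq:projection-residual}, we then insert the request \eqref{eq:request}. The transient-profile terms $S_H(E_{k-1}^\star-E_k^\star)$ and the $\hat O_H\Delta z_{k,0}$ terms should cancel, leaving
\[
\bar E_k=\Lambda\bar E_{k-1}+p_k+\xi_k,
\qquad
\xi_k=\Delta G_k\Delta U_k+\Delta O_k\Delta z_{k,0}-\omega_k+(\text{shift and numerical terms}),
\]
up to sign conventions that do not affect norms. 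This sign bookkeeping is the step needing most care. If the cancellation is off by a sign convention for $E$ versus $Y$, I would redefine $b_k$ consistently rather than change the estimate. The result is that $p_k$ enters additively, as already used in Proposition~1.

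\textbf{Step 2: bounding each term.} Assume $\norm{\bar E_{k-1}}_\infty\le R$ and $\norm{\Delta z_{k,0}}_\infty\le\bar Z_0$. The request-set containment lemma gives $b_k\in\mathcal B(R,\bar Z_0)$. Proposition~(projection budget and request closure), applied with $\mathcal B=\mathcal B(R,\bar Z_0)$, then gives $\norm{p_k}_\infty\le P(R,\bar Z_0)$. For the other terms, $\norm{\Lambda\bar E_{k-1}}_\infty\le\norm{\Lambda}_\infty\norm{\bar E_{k-1}}_\infty\le\lambda R$. Finally, $\norm{\xi_k}_\infty\le\bar\xi$ holds by definition of $\bar\xi$ as the bound on all non-projection perturbations, for example $\bar\xi\ge\sup_k(\norm{\omega_k}_\infty+\epsilon_G\norm{\Delta U_k}_\infty+\epsilon_O\bar Z_0+h_{{\rm shift},k}+\norm{\omega_{{\rm num},k}}_\infty)$.

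A subtle point is that $\bar\xi$ includes $\epsilon_G\norm{\Delta U_k}_\infty$, which depends on the update. Since $\mathcal U_\Delta$ is a certified bounded set, $\norm{\Delta U_k}_\infty$ is bounded uniformly, and I would take $\bar\xi$ to be built with that bound.

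\textbf{Step 3: closure.} The triangle inequality gives $\norm{\bar E_k}_\infty\le\lambda R+P(R,\bar Z_0)+\bar\xi\le R$ by \eqref{eq:request-closure}. Hence the ball is mapped into itself, and the hypotheses of the request-set containment lemma hold again at trial $k+1$, so induction propagates invariance to every trial.
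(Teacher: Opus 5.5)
Your argument is the paper's proof with the details filled in. You substitute the request into the selected recursion to isolate $\Lambda\bar E_{k-1}$, the projection residual and the non-projection terms, bound these by $\lambda R$, by $P(R,\bar Z_0)$ (via request-set containment and the projection-budget proposition) and by $\bar\xi$, and close with \eqref{eq:request-closure}.

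Your Step~1 caveat is more than a sign convention, since \eqref{eq:request} taken literally does not cancel (with $E_k=Y_r-Y_k$ it leaves $(2I-\Lambda)\bar E_{k-1}-2\hat O_H\Delta z_{k,0}$). Your proposed fix, $b_k=S_H(E_{k-1}^\star-E_k^\star)+(I-\Lambda)\bar E_{k-1}-\hat O_H\Delta z_{k,0}$, yields exactly your displayed recursion and leaves $\mathcal B(R,\bar Z_0)$ unchanged because both $\infty$-norm balls are symmetric, so the estimate goes through.
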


\begin{proof}
Substituting the update request into the selected learning recursion leaves
the contracted term \(\Lambda\bar E_{k-1}\), the projection residual, and
the remaining perturbation terms. Hence every state satisfying
\(\norm{\bar E_{k-1}}_\infty\leq R\) obeys
\(\norm{\bar E_k}_\infty\leq\lambda R+P(R,\bar Z_0)+\bar\xi\). The
right-hand side is at most \(R\) under \eqref{eq:request-closure}.
\end{proof}

Thus \eqref{eq:request-closure} is a request-closure, or certified
invariance, condition. It is not a small-gain theorem for an interconnection;
it checks that the output increments generated by the ISS state remain
inside the certified finite-horizon reachable set.

\begin{proposition}[Support-function projection audit]
If \({\mathcal B}(R,\bar Z_0)\) and
\({\mathcal U}_\Delta\) are compact convex sets, then the zero-projection
condition \({\mathcal B}(R,\bar Z_0)\subseteq G{\mathcal U}_\Delta\) is
equivalent to
\[
h_{{\mathcal B}(R,\bar Z_0)}(s)
\leq h_{{\mathcal U}_\Delta}(G^\top s)
\quad\text{for all }s .
\]
For a finite audited direction set \({\mathcal S}\), the reported
projection margin can be lower bounded by the maximum positive violation
\[
\max_{s\in{\mathcal S},\,\norm{s}_1\leq1}
\left[h_{{\mathcal B}(R,\bar Z_0)}(s)
-h_{{\mathcal U}_\Delta}(G^\top s)\right]_+ .
\]
\end{proposition}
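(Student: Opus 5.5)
The plan is to prove the equivalence by the standard support-function characterization of inclusion between closed convex sets, and then read off the audit bound from the same inequality restricted to finitely many directions.

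\emph{Convexity of the image.} Since \({\mathcal U}_\Delta\) is compact and convex and \(G\) is linear, the image \(G{\mathcal U}_\Delta\) is compact and convex. Its support function is
\[
h_{G{\mathcal U}_\Delta}(s)=\sup_{\Delta U\in{\mathcal U}_\Delta}\langle s,G\Delta U\rangle
=\sup_{\Delta U\in{\mathcal U}_\Delta}\langle G^\top s,\Delta U\rangle
=h_{{\mathcal U}_\Delta}(G^\top s).
\]
The set \({\mathcal B}(R,\bar Z_0)\) is also compact and convex. It is a Minkowski sum of the convex hull \({\mathcal B}_\star\), which we take to be compact (the profiles \(E_k^\star\) are bounded under Assumption~1), and two linear images of \(\infty\)-norm balls. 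Its support function is therefore the sum of the three support functions. I will use this additivity only to note that the left-hand side is computable, not in the proof itself.

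\emph{The equivalence.}
\begin{itemize}
\item Necessity is immediate: if \({\mathcal B}\subseteq G{\mathcal U}_\Delta\), then taking suprema of \(\langle s,\cdot\rangle\) over the smaller set gives \(h_{\mathcal B}(s)\le h_{G{\mathcal U}_\Delta}(s)\) for every \(s\).
\item For sufficiency, argue by contradiction. Suppose some \(b\in{\mathcal B}\) lies outside the closed convex set \(G{\mathcal U}_\Delta\). The strict separating hyperplane theorem gives \(s\) with \(\langle s,b\rangle>h_{G{\mathcal U}_\Delta}(s)\). Then \(h_{\mathcal B}(s)\ge\langle s,b\rangle>h_{{\mathcal U}_\Delta}(G^\top s)\), which violates the hypothesis.
\end{itemize}
This is the only step with real content. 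The point to check is that closedness of \(G{\mathcal U}_\Delta\), guaranteed by compactness, is exactly what strict separation requires.

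\emph{The audit bound.} Here I will interpret "projection margin" as \(\bar p_{\mathcal B}\) of \eqref{eq:projection-lower-bound}. The claim is then that the worst positive violation over normalized audited directions lower-bounds the projection obstruction. Fix \(b\in{\mathcal B}\) and any \(\Delta U\in{\mathcal U}_\Delta\), and take \(s\) with \(\norm{s}_1\le1\). By Hölder's inequality,
\[
\langle s,b\rangle-h_{{\mathcal U}_\Delta}(G^\top s)\le\langle s,b-G\Delta U\rangle\le\norm{b-G\Delta U}_\infty .
\]
Taking the infimum over \(\Delta U\) and then the supremum over \(b\) gives
\[
h_{\mathcal B}(s)-h_{{\mathcal U}_\Delta}(G^\top s)\le\bar p_{\mathcal B}.
\]
Since \(\bar p_{\mathcal B}\ge0\), the same bound holds for the positive part. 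Maximizing over the finite set \({\mathcal S}\) then yields the stated lower bound.

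A positive value of this bound therefore certifies failure of \eqref{eq:request-inclusion}. By Corollary~1, it also certifies a band contribution of at least that value divided by \(1-\lambda\). The \(\ell_1\)-normalization is what makes the \(\ell_\infty\) distance appear through duality, so this pairing of norms is the point I would state explicitly.
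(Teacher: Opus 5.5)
Your proof is correct and takes essentially the same route as the paper, whose proof just cites the support-function characterization of convex-set inclusion and the dual representation of the infinity-norm distance. You fill in both steps properly: strict separation from the closed convex set \(G{\mathcal U}_\Delta\) gives the nontrivial direction, and H\"older's inequality with \(\norm{s}_1\le1\), the weak-duality half of that dual representation, is all the lower bound on \(\bar p_{\mathcal B}\) needs.
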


\begin{proof}
The first statement is the standard support-function characterization of
convex-set inclusion. The second follows by restricting the separating
directions to the audited finite set and using the dual representation of
the infinity-norm distance.
\end{proof}

\begin{algorithm}[H]
\caption{Trial-Axis ISS Certification Protocol}
\label{alg:iss-certification}
\begin{algorithmic}[1]
\REQUIRE Data splits \({\mathcal D}_\Theta,{\mathcal D}_C,
{\mathcal D}_{\rm des},{\mathcal D}_{\rm cal}\); horizon \(H\);
candidate set \({\mathcal Q}\) of \((S_H,\Lambda,\mu,{\mathcal U}_\Delta)\);
radius \(R\), target band \(\Delta_{\rm req}\), risk levels.
\ENSURE Status \(\in\{\textsc{Certified},\textsc{NotCertified}\}\),
frozen controller, budget \(\bar w_\beta\), band
\(\Delta_{{\rm ISS},\beta}\).
\STATE Fit \((\hat A,\hat B)\) from \({\mathcal D}_\Theta\) and fit
\(\hat C\) from \({\mathcal D}_C\).
\FOR{each candidate \(q=(S_H,\Lambda,\mu,{\mathcal U}_\Delta)\in{\mathcal Q}\)}
\STATE Compute \(\hat O_H(q)\), \(\hat G_H(q)\), and
\(\lambda=\norm{\Lambda}_\infty\).
\IF{\(\lambda\geq 1\)}
    \STATE discard \(q\); \textbf{continue}
\ENDIF
\STATE Build the deterministic request set
\({\mathcal B}(R,\bar Z_0)\) from \eqref{eq:det-request-set}; use
\({\mathcal D}_{\rm des}\) only for candidate selection.
\STATE Compute \(P(R,\bar Z_0)\) by solving \eqref{eq:p-budget} over
\({\mathcal B}(R,\bar Z_0)\).
\IF{\(\sigma_{\min}(\hat G_H)-\epsilon_{G,2}\leq0\)}
    \STATE discard \(q\); \textbf{continue}
\ENDIF
\IF{\eqref{eq:request-closure} fails}
    \STATE discard \(q\); \textbf{continue}
\ENDIF
\STATE Freeze \(q\), \(\hat O_H\), and \(\hat G_H\).
\FOR{each calibration episode \(i\in{\mathcal D}_{\rm cal}\)}
    \STATE Run the frozen update law \eqref{eq:controller} and store
    \((\Delta Y_k^{(i)},\Delta z_{k,0}^{(i)},\Delta U_k^{(i)})\).
    \STATE Set \(S_i\) by \eqref{eq:score}.
\ENDFOR
\STATE Compute the split-conformal quantile
\(\widehat q_{1-\beta_{\rm cal}}\).
\STATE Compute \(\bar w_\beta\) from \eqref{eq:hp-budget} and set
\(\Delta_{{\rm ISS},\beta}=\bar w_\beta/(1-\lambda)\).
\IF{\(\Delta_{{\rm ISS},\beta}\leq\Delta_{\rm req}\)}
    \FOR{deployment trials \(k=1,\ldots,K_{\max}\)}
        \STATE Form \(b_k\) by \eqref{eq:request}.
        \STATE Solve \eqref{eq:controller} for \(\Delta U_k\).
        \STATE Apply the admissible update and record
        \(p_k=b_k-\hat G_H\Delta U_k\).
    \ENDFOR
    \RETURN \(\textsc{Certified},q,\bar w_\beta,\Delta_{{\rm ISS},\beta}\).
\ENDIF
\ENDFOR
\RETURN \(\textsc{NotCertified}\).
\end{algorithmic}
\end{algorithm}

Algorithm~\ref{alg:iss-certification} is a rejection-capable certification
procedure. It returns \(\textsc{NotCertified}\) whenever the contraction
condition, channel margin, request closure, or certified band requirement
fails. The design split is used to select rows, gains, and admissible
updates; the calibration split is used only after the controller is frozen.
This ordering ensures the conformal score in Theorem~2 is computed under a
fixed, rather than adaptively updated, controller. During deployment the
algorithm solves the same constrained update problem that
appears in the proof and records the projection residual used in the ISS
budget.

\section{Trial-Axis Input-to-State Stability Certificates}

The central deterministic event states that the learned model, the true
selected channel, and the deployment protocol are close enough for the
trial-axis recursion to be closed.

\begin{assumption}[Certified deployment event]
For all certified deployment trials,
\begin{align}
\norm{\omega_k}_\infty&\leq \xi_0,&
\norm{\hat G_H-G_{\star,H}}_\infty&\leq\epsilon_G,
\nonumber\\
\norm{\hat O_H-O_{\star,H}}_\infty&\leq\epsilon_O,&
\norm{\Delta U_k}_\infty&\leq\bar U_\Delta,
\nonumber\\
\norm{\Delta z_{k,0}}_\infty&\leq\bar Z_0,&
\norm{p_k}_\infty&\leq\bar p.
\label{eq:deployment-event}
\end{align}
The learned selected channel also satisfies the lower-margin condition
\begin{equation}
\sigma_{\min}(\hat G_H)-\epsilon_{G,2}\geq \gamma_G>0 .
\label{eq:channel-margin}
\end{equation}
Here \(\epsilon_{G,2}\) denotes the corresponding spectral-norm perturbation
radius satisfying
\begin{equation}
\norm{\hat G_H-G_{\star,H}}_2\leq \epsilon_{G,2}.
\label{eq:spectral-channel-radius}
\end{equation}
\end{assumption}

\begin{lemma}[Trial-axis error recursion]
Under Assumption~1, the selected controlled error satisfies
\begin{equation}
\bar E_k=\Lambda\bar E_{k-1}+r_k
\label{eq:error-recursion}
\end{equation}
with
\begin{equation}
\norm{r_k}_\infty
\leq
\xi_0+\bar p+\epsilon_G\bar U_\Delta+\epsilon_O\bar Z_0 .
\label{eq:r-bound}
\end{equation}
\end{lemma}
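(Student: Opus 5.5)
The plan is a direct algebraic substitution: express the trial-to-trial change of $\bar E_k$ through the true selected channel \eqref{eq:true-channel}, insert the requested increment \eqref{eq:request} via the projection residual \eqref{eq:projection-residual}, and collect everything except $\Lambda\bar E_{k-1}$ into $r_k$. Because the reference is repeated, $E_k-E_{k-1}=-(Y_k-Y_{k-1})$ under the convention $E=Y_r-Y$. Hence
\[
\bar E_k-\bar E_{k-1}=-\Delta\bar Y_k-S_H(E_k^\star-E_{k-1}^\star).
\]
The sign convention for $E$ has to be matched to \eqref{eq:request}. With the request as written, the consistent reading is $E=Y-Y_r$, giving $\bar E_k=\bar E_{k-1}+\Delta\bar Y_k-S_H(E_k^\star-E_{k-1}^\star)$. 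I will fix whichever convention makes the nominal case collapse to $\bar E_k=\Lambda\bar E_{k-1}$, and state it explicitly.

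Next I substitute \eqref{eq:true-channel} and split the true quantities into learned quantities plus perturbations:
\[
G_{\star,H}\Delta U_k=\hat G_H\Delta U_k-(\hat G_H-G_{\star,H})\Delta U_k,
\qquad
O_{\star,H}\Delta z_{k,0}=\hat O_H\Delta z_{k,0}-(\hat O_H-O_{\star,H})\Delta z_{k,0}.
\]
I then use $\hat G_H\Delta U_k=b_k-p_k$ and expand $b_k$ by \eqref{eq:request}. Three cancellations should occur: the $E^\star$ differences cancel, the $\hat O_H\Delta z_{k,0}$ term cancels against its copy inside $b_k$, and $\bar E_{k-1}-(I-\Lambda)\bar E_{k-1}=\Lambda\bar E_{k-1}$. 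This leaves
\[
r_k=\omega_k-p_k-(\hat G_H-G_{\star,H})\Delta U_k-(\hat O_H-O_{\star,H})\Delta z_{k,0},
\]
up to an overall sign on each term that depends on the convention fixed above.

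The bound \eqref{eq:r-bound} then follows from the triangle inequality, submultiplicativity of the induced $\infty$-norm, and the bounds of the certified deployment event \eqref{eq:deployment-event}. Concretely, $\norm{(\hat G_H-G_{\star,H})\Delta U_k}_\infty\le\epsilon_G\bar U_\Delta$ and $\norm{(\hat O_H-O_{\star,H})\Delta z_{k,0}}_\infty\le\epsilon_O\bar Z_0$. The shift and numerical terms are not part of \eqref{eq:r-bound}, so I will treat them as absent here or absorbed into $\omega_k$. The statement says ``Assumption~1'', but the needed bounds are those of the certified deployment event (Assumption~2), so I will invoke both. The channel-margin condition \eqref{eq:channel-margin} is not needed for this lemma.

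The main obstacle is bookkeeping rather than analysis. The sign convention for $E_k$ relative to $Y_k$, and the exact definition of $\Delta\bar Y_k$ with $\Delta z_{k,0}$ entering both the true channel and the request, must be made consistent so that the $\hat O_H$ terms cancel exactly rather than doubling. I would first verify this on the nominal case ($\omega_k=0$, $p_k=0$, exact channels) and check that it yields $\bar E_k=\Lambda\bar E_{k-1}$. If the signs do not align, I would redefine $r_k$ to carry the opposite sign of $\hat O_H\Delta z_{k,0}$ and note the convention. This does not affect the norm bound, because only the mismatch term $(\hat O_H-O_{\star,H})\Delta z_{k,0}$ survives in the final estimate.
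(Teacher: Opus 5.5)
Your route is the same as the paper's: write the increment of \(\bar E_k\), substitute \eqref{eq:true-channel}, replace \(\hat G_H\Delta U_k\) by \(b_k-p_k\), cancel, and bound the four leftover terms with \eqref{eq:deployment-event}. Your final \(r_k\) and the triangle-inequality/induced-norm bound are what the paper's one-paragraph proof asserts. (The paper's prose calls the task-class assumption ``Assumption~0'', so ``Assumption~1'' in the statement is the deployment event itself. You are right that \eqref{eq:channel-margin} is not used.)

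The step that fails is the sign bookkeeping you single out as the crux. With \eqref{eq:request} as printed, no choice of sign for \(E\) yields the three cancellations you list. For \(E=Y-Y_r\) the contraction term works, but substitution gives
\begin{align*}
\bar E_k={}&\Lambda\bar E_{k-1}+2S_H(E_{k-1}^\star-E_k^\star)+(\hat O_H+O_{\star,H})\Delta z_{k,0}\\
&+\omega_k-p_k-(\hat G_H-G_{\star,H})\Delta U_k .
\end{align*}
So the \(E^\star\) differences and the \(\Delta z_{k,0}\) terms add rather than cancel. For \(E=Y_r-Y\) the \(E^\star\) terms cancel, but the coefficient of \(\bar E_{k-1}\) becomes \(2I-\Lambda\), and the \(\Delta z_{k,0}\) terms still add.

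The \(\hat O_H\) sign is wrong under either convention. The reason is that \(b_k\) is matched to \(\hat G_H\Delta U_k\), which sits on the same side of \eqref{eq:true-channel} as \(O_{\star,H}\Delta z_{k,0}\), so the request must carry \(-\hat O_H\Delta z_{k,0}\). Consequently your fallback does not work. The quantity \(r_k=\bar E_k-\Lambda\bar E_{k-1}\) is determined by the controller and cannot be redefined. With the printed sign it contains \((\hat O_H+O_{\star,H})\Delta z_{k,0}\), which is bounded only by \(\norm{\hat O_H+O_{\star,H}}_\infty\bar Z_0\), not by \(\epsilon_O\bar Z_0\). Likewise, the doubled \(E^\star\) term is absent from \eqref{eq:r-bound} unless \(E_k^\star\) is trial-invariant.

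The repair is to the update law, not to a convention. With \(E=Y-Y_r\) take
\[
b_k=S_H(E_k^\star-E_{k-1}^\star)-(I-\Lambda)\bar E_{k-1}-\hat O_H\Delta z_{k,0}.
\]
Then all three cancellations occur, \(r_k=\omega_k-p_k-(\hat G_H-G_{\star,H})\Delta U_k-(\hat O_H-O_{\star,H})\Delta z_{k,0}\), and \eqref{eq:r-bound} follows exactly as you describe. The paper's proof simply asserts the cancellation and glosses over the same inconsistency. You should therefore state the corrected request explicitly as part of the lemma's hypotheses.
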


\begin{proof}
Substitute the true selected channel \eqref{eq:true-channel} into the
definition of the next selected error. The desired increment identity
\eqref{eq:request} cancels the nominal selected channel and leaves
\(\Lambda\bar E_{k-1}\), the projection residual \(p_k\), the channel
perturbation
\((\hat G_H-G_{\star,H})\Delta U_k\), the reset/observability
perturbation
\((\hat O_H-O_{\star,H})\Delta z_{k,0}\), and the selected
residual \(\omega_k\). Applying \eqref{eq:deployment-event} gives
\eqref{eq:r-bound}.
\end{proof}

Theorem~1 shows that the learned-channel margin, projection budget, reset
mismatch, and residual calibration enter the learning recursion as ISS
inputs, rather than introducing a new comparison lemma.

\begin{theorem}[Deterministic practical ISS certificate]
Suppose Assumption~1 holds and define
\begin{equation}
\bar w=\xi_0+\bar p+\epsilon_G\bar U_\Delta+\epsilon_O\bar Z_0 .
\label{eq:wbar}
\end{equation}
Then the selected learning-error system is ISS along the trial axis:
\begin{equation}
\norm{\bar E_k}_\infty
\leq
\lambda^k\norm{\bar E_0}_\infty
+\frac{1-\lambda^k}{1-\lambda}\bar w .
\label{eq:iss-bound}
\end{equation}
Equivalently, \eqref{eq:piss-def} holds with
\(\beta(r,k)=\lambda^k r\), \(\gamma(s)=s/(1-\lambda)\), and \(c=0\), after
identifying \(s\) with the supremum of the certified perturbation budget.
\end{theorem}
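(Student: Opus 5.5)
The plan is to unroll the one-step recursion of the trial-axis error lemma (Lemma~2) and bound it with a discrete comparison argument. Under Assumption~1 (the certified deployment event), Lemma~2 gives \(\bar E_k=\Lambda\bar E_{k-1}+r_k\) with \(\norm{r_k}_\infty\leq\bar w\), where \(\bar w\) is exactly the constant in \eqref{eq:wbar}. Since \(\Lambda\) is diagonal with \(\norm{\Lambda}_\infty\leq\lambda<1\), the induced infinity norm is submultiplicative. Taking norms and applying the triangle inequality then yields the scalar inequality
\[
\norm{\bar E_k}_\infty\leq\lambda\norm{\bar E_{k-1}}_\infty+\bar w,
\qquad k\geq1 .
\]

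I will prove \eqref{eq:iss-bound} by induction on \(k\). The base case \(k=0\) is trivial, since the second term vanishes. For the inductive step, I insert the hypothesis into the scalar inequality:
\[
\lambda\Bigl(\lambda^{k-1}\norm{\bar E_0}_\infty+\tfrac{1-\lambda^{k-1}}{1-\lambda}\bar w\Bigr)+\bar w
=\lambda^k\norm{\bar E_0}_\infty+\tfrac{1-\lambda^k}{1-\lambda}\bar w .
\]
The equality uses the geometric-sum identity \(\lambda(1-\lambda^{k-1})+(1-\lambda)=1-\lambda^k\). Equivalently, one can write the closed form \(\bar E_k=\Lambda^k\bar E_0+\sum_{j=1}^{k}\Lambda^{k-j}r_j\) and bound each summand by \(\lambda^{k-j}\bar w\).

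For the comparison with Definition~1, note that \(\beta(r,k)=\lambda^k r\) is continuous, of class \(\mathcal K\) in \(r\) for each fixed \(k\), and decreasing to zero in \(k\) for each fixed \(r>0\), because \(0\leq\lambda<1\). This holds provided \(\lambda>0\); if \(\lambda=0\), I would replace it by any \(\lambda'\in(0,1)\), which only weakens the bound. Next, \(\tfrac{1-\lambda^k}{1-\lambda}\bar w\leq\bar w/(1-\lambda)=\gamma(\bar w)\), and \(\gamma\) is linear with positive slope, hence of class \(\mathcal K\).

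The main obstacle is this last identification. Definition~1 asks for \(\gamma\) evaluated at \(\sup_{j<k}\norm{v_j}\), whereas \(\bar w\) is a uniform certified budget rather than a trajectory supremum. To bridge the two, I will apply Lemma~2 trialwise: the proof of that lemma actually gives \(\norm{r_j}_\infty\leq\norm{v_j}_{\rm ISS}\), where the budget norm is built from the realized values \(\norm{\omega_j}_\infty\), \(\norm{p_j}_\infty\), \(\epsilon_G\norm{\Delta U_j}_\infty\), and \(\epsilon_O\norm{\Delta z_{j,0}}_\infty\). The shift and numerical components are nonnegative, so they only enlarge this bound. Rerunning the induction with \(s_k=\sup_{j\leq k}\norm{v_j}_{\rm ISS}\) in place of \(\bar w\) gives the genuine ISS form with \(c=0\), and under \eqref{eq:deployment-event} we have \(s_k\leq\bar w\). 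One further point needs care: the recursion indices run from \(1\) to \(k\), while Definition~1 takes the supremum over \(j<k\). I will handle this by shifting the indexing of \(v\) by one trial, stated as the identification convention mentioned in the theorem.
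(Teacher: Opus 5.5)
Your proposal is correct and follows the paper's proof, which likewise takes the scalar recursion $\norm{\bar E_k}_\infty\leq\lambda\norm{\bar E_{k-1}}_\infty+\bar w$ from Lemma~2 and iterates it; your handling of $\lambda=0$, of the realized budget $s_k$, and of the index shift spells out what the paper compresses into ``identifying $s$ with the supremum of the certified perturbation budget.'' One harmless slip: the side remark $s_k\leq\bar w$ holds only if the shift and numerical components are left out of the budget norm, since Assumption~1 does not bound them, but nothing in your argument depends on that remark.
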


\begin{proof}
From Lemma~2,
\(\norm{\bar E_k}_\infty\leq \lambda\norm{\bar E_{k-1}}_\infty+\bar w\).
Iterating this scalar comparison recursion gives \eqref{eq:iss-bound}. The
functions \(\beta\) and \(\gamma\) are of class \(\KL\) and \(\K\),
respectively, because \(0\leq\lambda<1\). Therefore the trial-axis system is
ISS on the certified deployment grid.
\end{proof}

Consequently, the main design problem becomes certificate construction:
computing \(\bar p\), verifying a channel margin, and calibrating \(\xi_0\),
rather than merely minimizing one-step prediction error.

\begin{corollary}[Componentwise ISS gains]
Let
\[
\bar w=\bar w_\omega+\bar w_p+\bar w_G+\bar w_O
\]
where
\(\bar w_\omega=\xi_0\), \(\bar w_p=\bar p\),
\(\bar w_G=\epsilon_G\bar U_\Delta\), and
\(\bar w_O=\epsilon_O\bar Z_0\). Then
\begin{equation}
\norm{\bar E_k}_\infty
\leq
\lambda^k\norm{\bar E_0}_\infty
+\sum_{q\in\{\omega,p,G,O\}}
\frac{1-\lambda^k}{1-\lambda}\bar w_q .
\label{eq:component-gains}
\end{equation}
Each residual source therefore shares the same scalar gain
\(\gamma_q(s)=s/(1-\lambda)\), though its numerical size is set by a
distinct certificate: residual calibration for \(\bar w_\omega\), projection
computation for \(\bar w_p\), channel identification for \(\bar w_G\), and
reset/observability certification for \(\bar w_O\).
\end{corollary}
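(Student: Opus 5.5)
The corollary follows from Theorem~1 by splitting the single budget \(\bar w\) into its four summands. Nothing new has to be estimated. I will take the bound \eqref{eq:iss-bound} as given. I will then substitute \(\bar w=\bar w_\omega+\bar w_p+\bar w_G+\bar w_O\), which is the definition \eqref{eq:wbar} with the four terms named separately. Since \(0\le\lambda<1\), the factor \((1-\lambda^k)/(1-\lambda)\) is a nonnegative scalar. It therefore distributes over the finite sum, and that gives \eqref{eq:component-gains} exactly, with equality in the rearrangement step.

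For the gain statement, I will note that each summand is multiplied by the same coefficient \((1-\lambda^k)/(1-\lambda)\). This coefficient is bounded above by \(1/(1-\lambda)\) uniformly in \(k\). So each component is covered by the class-\(\K\) function \(\gamma_q(s)=s/(1-\lambda)\), which is linear and strictly increasing with \(\gamma_q(0)=0\). This gain is the same \(\gamma\) identified in Theorem~1.

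To make the componentwise reading meaningful, I will trace each summand back through Lemma~2. That lemma bounds \(\norm{r_k}_\infty\) by the triangle inequality over the four perturbation terms \(\omega_k\), \(p_k\), \((\hat G_H-G_{\star,H})\Delta U_k\) and \((\hat O_H-O_{\star,H})\Delta z_{k,0}\). Each of these is bounded separately under Assumption~1. Hence \(\bar w_\omega\), \(\bar w_p\), \(\bar w_G\) and \(\bar w_O\) are individually justified bounds, not just an arbitrary partition of \(\bar w\). Each one is fed by a distinct certified quantity from \eqref{eq:deployment-event}: residual calibration supplies \(\xi_0\), the projection computation supplies \(\bar p\), channel identification supplies \(\epsilon_G\bar U_\Delta\), and reset/observability certification supplies \(\epsilon_O\bar Z_0\). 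This attributes each contribution to its certificate.

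There is no real obstacle here. The only point needing care is the interpretation rather than the algebra. The claim that each source has ``gain'' \(s/(1-\lambda)\) must be read as a bound on its contribution to the ultimate band. It is not a separate ISS estimate with respect to that input alone. Because the comparison recursion \(\norm{\bar E_k}_\infty\le\lambda\norm{\bar E_{k-1}}_\infty+\bar w\) is linear in \(\bar w\), superposition of the four contributions is exact, and I will state it that way.
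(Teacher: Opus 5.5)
Your proposal is correct and follows the paper's route: the paper gives no separate proof because the corollary is Theorem~1's bound \eqref{eq:iss-bound} with \(\bar w\) from \eqref{eq:wbar} written as its four summands, and the common factor \((1-\lambda^k)/(1-\lambda)\le 1/(1-\lambda)\) distributed over the sum. Tracing each \(\bar w_q\) back to the separate triangle-inequality terms in Lemma~2 is exactly why the paper attributes each summand to a distinct certificate.
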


\begin{remark}[Componentwise gain decomposition]
Equation~\eqref{eq:component-gains} exposes the individual ISS input
channels hidden inside a single aggregate ultimate-band budget. The
decomposition separates the part caused by unmodelled dynamics from the part
caused by constrained input reachability and finite-data uncertainty. This
is the certificate object used in the numerical section: disturbance, noise,
mismatch, and finite-sample effects are not generic perturbations, but
separate terms in the same gain estimate.
\end{remark}

\begin{corollary}[Finite-entry band]
For any \(\eta>0\), the selected error enters and remains in
\begin{equation}
{\mathcal X}_\eta=\left\{\bar E:\norm{\bar E}_\infty
\leq \frac{\bar w}{1-\lambda}+\eta\right\}
\label{eq:entry-band}
\end{equation}
after at most
\begin{equation}
k^\star(\eta)=
\left\lceil
\frac{\log(\eta/\norm{\bar E_0}_\infty)}{\log \lambda}
\right\rceil_+
\label{eq:entry-time}
\end{equation}
trials. The resulting finite-entry band is therefore the ultimate bound
induced by the ISS gain.
\end{corollary}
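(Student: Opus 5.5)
The plan is to work directly from the ISS estimate \eqref{eq:iss-bound} of Theorem~1 and split it into its transient and steady-state parts. Since \(0\leq\lambda<1\) and \(\bar w\geq0\), we have \(\frac{1-\lambda^k}{1-\lambda}\bar w\leq \frac{\bar w}{1-\lambda}\) for every \(k\geq0\). Hence
\[
\norm{\bar E_k}_\infty\leq \lambda^k\norm{\bar E_0}_\infty+\frac{\bar w}{1-\lambda}\qquad\text{for all }k .
\]
Membership in \({\mathcal X}_\eta\) therefore follows as soon as \(\lambda^k\norm{\bar E_0}_\infty\leq\eta\). This reduces the corollary to a scalar geometric-decay question.

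First we dispose of the degenerate cases. If \(\norm{\bar E_0}_\infty\leq\eta\), the transient term is at most \(\eta\) already at \(k=0\). The argument of the logarithm is then \(\geq1\), so the quotient in \eqref{eq:entry-time} is nonpositive and \(\lceil\cdot\rceil_+=0\). This case includes \(\bar E_0=0\), which we read with the convention \(k^\star=0\). If \(\lambda=0\), then \(\norm{\bar E_k}_\infty\leq\bar w\) for all \(k\geq1\), and \(k^\star\leq1\) under the natural convention \(\log 0=-\infty\). These conventions should be stated briefly.

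In the main case \(0<\lambda<1\) and \(\norm{\bar E_0}_\infty>\eta\), we solve \(\lambda^k\norm{\bar E_0}_\infty\leq\eta\). Taking logarithms gives \(k\log\lambda\leq\log(\eta/\norm{\bar E_0}_\infty)\). Dividing by \(\log\lambda<0\) reverses the inequality, giving \(k\geq \log(\eta/\norm{\bar E_0}_\infty)/\log\lambda\), and the right-hand side is positive. The smallest integer satisfying this is exactly \(k^\star(\eta)\).

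For the ``remains'' part, note that \(k\mapsto\lambda^k\norm{\bar E_0}_\infty\) is nonincreasing. So once \(k\geq k^\star(\eta)\) the transient term stays below \(\eta\), and the displayed bound keeps \(\bar E_k\in{\mathcal X}_\eta\) for all later certified trials. The final sentence of the corollary then follows: letting \(\eta\downarrow0\) gives \(\limsup_k\norm{\bar E_k}_\infty\leq\bar w/(1-\lambda)\), the ultimate bound set by the gain \(\gamma(s)=s/(1-\lambda)\).

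The argument has no real obstacle. The only points needing care are the sign reversal when dividing by \(\log\lambda\), and the conventions for the edge cases \(\lambda=0\) and \(\norm{\bar E_0}_\infty\leq\eta\) so that \(\lceil\cdot\rceil_+\) is well defined. I would also state explicitly that the conclusion holds on the certified deployment grid where Assumption~1, and hence Theorem~1, applies.
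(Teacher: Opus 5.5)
Your proof is correct and follows the route the paper intends: the paper gives no separate proof and reads the corollary off \eqref{eq:iss-bound} by bounding $(1-\lambda^k)/(1-\lambda)\le 1/(1-\lambda)$ and solving $\lambda^k\norm{\bar E_0}_\infty\le\eta$ for $k$. One small fix concerns $\lambda=0$ with $\norm{\bar E_0}_\infty>\eta$: there the convention $\log 0=-\infty$ makes \eqref{eq:entry-time} return $k^\star=0$, which is not a valid entry index when $\norm{\bar E_0}_\infty>\bar w+\eta$, so set $k^\star=1$ explicitly in that case (this is the limit of the formula as $\lambda\downarrow0$) rather than appealing to the convention.
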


\begin{proposition}[ISS Lyapunov interpretation]
Let \(V(\bar E)=\norm{\bar E}_\infty\). Under Assumption~1,
\begin{equation}
V(\bar E_k)\leq \lambda V(\bar E_{k-1})+\norm{v_{k-1}}_{\rm ISS},
\label{eq:lyapunov-iss}
\end{equation}
where
\begin{align*}
\norm{v_{k-1}}_{\rm ISS}={}&
\xi_0+\norm{p_{k-1}}_\infty
+\epsilon_G\norm{\Delta U_{k-1}}_\infty \\
&+\epsilon_O\norm{\Delta z_{k-1,0}}_\infty .
\end{align*}
Consequently, for any \(\sigma\in(0,1-\lambda)\),
\begin{align*}
&V(\bar E_{k-1})\geq
\frac{\norm{v_{k-1}}_{\rm ISS}}{1-\lambda-\sigma}\\
&\quad\Longrightarrow\quad
V(\bar E_k)-V(\bar E_{k-1})
\leq -\sigma V(\bar E_{k-1}).
\end{align*}
Hence, \(V\) is an ISS Lyapunov function on the certified finite deployment
grid.
\end{proposition}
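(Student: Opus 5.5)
The plan is to work directly from the error recursion of Lemma~2, keeping each perturbation term at its trial-specific size rather than replacing it by its worst-case bound. From the proof of Lemma~2 we have
\[
\bar E_k=\Lambda\bar E_{k-1}+\omega_k+p_k+(\hat G_H-G_{\star,H})\Delta U_k+(\hat O_H-O_{\star,H})\Delta z_{k,0}.
\]
Taking infinity norms and using the triangle inequality, $\norm{\Lambda}_\infty\leq\lambda$, $\norm{\omega_k}_\infty\leq\xi_0$, and the channel and observability radii $\epsilon_G,\epsilon_O$ of Assumption~1, gives
\[
V(\bar E_k)\leq\lambda V(\bar E_{k-1})+\xi_0+\norm{p_k}_\infty+\epsilon_G\norm{\Delta U_k}_\infty+\epsilon_O\norm{\Delta z_{k,0}}_\infty .
\]

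The first issue is indexing. The statement writes the input with index $k-1$, but the quantities $p,\Delta U,\Delta z_{\cdot,0}$ that actually enter the recursion are those of trial $k$. I would treat this as a labeling convention: set $v_{k-1}$ to be the perturbation collected while passing from $\bar E_{k-1}$ to $\bar E_k$, so that $v_{k-1}$ is built from $(p_k,\Delta U_k,\Delta z_{k,0})$. This makes \eqref{eq:lyapunov-iss} exactly the inequality above. It also matches the $\sup_{0\le j<k}$ in \eqref{eq:piss-def}. I would state this convention explicitly, because otherwise the indices simply do not match.

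With \eqref{eq:lyapunov-iss} in hand, the dissipation implication follows by subtracting $V(\bar E_{k-1})$:
\[
V(\bar E_k)-V(\bar E_{k-1})\leq-(1-\lambda)V(\bar E_{k-1})+\norm{v_{k-1}}_{\rm ISS}.
\]
Fix $\sigma\in(0,1-\lambda)$. The hypothesis $V(\bar E_{k-1})\geq\norm{v_{k-1}}_{\rm ISS}/(1-\lambda-\sigma)$ gives $\norm{v_{k-1}}_{\rm ISS}\leq(1-\lambda-\sigma)V(\bar E_{k-1})$. Substituting, the right-hand side is at most $-\sigma V(\bar E_{k-1})$.

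To finish, I would check the standard discrete-time ISS-Lyapunov requirements. First, $V=\norm{\cdot}_\infty$ is sandwiched by the class-$\mathcal K_\infty$ functions $\alpha_1(s)=\alpha_2(s)=s$. Second, the implication-form decrease holds with $\chi(s)=s/(1-\lambda-\sigma)\in\mathcal K$ and a linear decrease rate. Together these make $V$ an ISS Lyapunov function on the certified grid, consistent with Theorem~1.

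I expect the main obstacle to be bookkeeping rather than any real difficulty. The statement's $\norm{v}_{\rm ISS}$ carries the constant $\xi_0$ in place of $\norm{\omega_k}_\infty$ and omits the shift and numerical terms. So I need to note that replacing $\norm{\omega_k}_\infty$ by its bound $\xi_0$ only enlarges the right-hand side, and that $h_{\rm shift}$ and $\omega_{\rm num}$ are either absorbed into $\xi_0$ or taken to be zero under Assumption~1.
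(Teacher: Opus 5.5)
Your proof is correct and follows the paper's argument. The paper also reads \eqref{eq:lyapunov-iss} off Lemma~2 before the uniform supremum is taken, subtracts \(V(\bar E_{k-1})\), and bounds the input term by \((1-\lambda-\sigma)V(\bar E_{k-1})\); your relabeling, in which \(v_{k-1}\) is built from \(p_k,\Delta U_k,\Delta z_{k,0}\), fixes an index mismatch the paper's proof passes over silently and agrees with the \(\sup_{0\le j<k}\) in \eqref{eq:piss-def}.
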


\begin{proof}
Inequality \eqref{eq:lyapunov-iss} is Lemma~2 before taking a uniform
supremum over the perturbation components. Subtracting \(V(\bar E_{k-1})\)
from both sides gives
\begin{align*}
V(\bar E_k)-V(\bar E_{k-1})
&\leq -(1-\lambda)V(\bar E_{k-1})\\
&\quad+\norm{v_{k-1}}_{\rm ISS}.
\end{align*}
The stated implication follows by upper bounding the input term by
\((1-\lambda-\sigma)V(\bar E_{k-1})\).
\end{proof}

\begin{remark}[Practical versus ideal ISS budgets]
In the ideal deterministic case the constant part of the budget can be made
zero by taking exact channels, no reset mismatch, no solver error, and
unconstrained full-row-rank updates. In the data-driven experiments the
budget is nonzero because finite samples, calibration scores, and
deployment disturbances are nonzero. Accordingly, the selected error is
driven to a data-dependent ultimate ball whose radius is computed before
deployment, which is what makes the result a practical, rather than
idealized, ISS statement.
\end{remark}

\begin{assumption}[Episode-level exchangeability]
Conditional on the fitted Koopman model and the frozen controller
\((\hat G_H,\hat O_H,S_H,H,\Lambda,\mu,{\mathcal U}_\Delta)\), the
calibration episodes and the deployment episode are exchangeable under the
declared resettable task protocol.
\end{assumption}

Assumption~2 is stated at the episode level, where each episode consists of
the reset state, disturbance realization, sensor noise, actuator mismatch,
and the closed-loop trajectory generated by the frozen controller. No
independence of individual time samples within an episode is required. If the reset
protocol, disturbance law, actuator limits, or reference task changes after
calibration, the split-conformal part of the certificate is no longer
claimed; only the deterministic ISS result applies with a separately
verified deployment-shift margin.

\begin{lemma}[Finite-horizon perturbation radii]
Suppose the identification split supplies one-step radii in a
submultiplicative induced norm,
\[
\norm{\widehat A-A_\star}\leq\varepsilon_A,\quad
\norm{\widehat B-B_\star}\leq\varepsilon_B,\quad
\norm{\widehat C-C_\star}\leq\varepsilon_C .
\]
For \(j\geq1\), define
\begin{equation}
D_A(j)=\varepsilon_A\sum_{\ell=0}^{j-1}
\norm{\widehat A}^{j-1-\ell}
\bigl(\norm{\widehat A}+\varepsilon_A\bigr)^\ell ,
\label{eq:DAj}
\end{equation}
and set \(D_A(0)=0\). Then the \(j\)-step observability block and Markov
block errors satisfy
\begin{align}
\delta_O(j)
={}&\norm{\widehat C}D_A(j)
+\varepsilon_C\bigl(\norm{\widehat A}+\varepsilon_A\bigr)^j,
\label{eq:deltaO}\\
\delta_G(j)
={}&\norm{\widehat C}\norm{\widehat A}^j\varepsilon_B
+\delta_O(j)\bigl(\norm{\widehat B}+\varepsilon_B\bigr).
\label{eq:deltaG}
\end{align}
Consequently, conservative stacked finite-horizon radii are
\begin{equation}
\varepsilon_{O,H,\beta}
=\sum_{j=0}^{H-1}\delta_O(j),\qquad
\varepsilon_{G,H,\beta}
=\sum_{j=0}^{H-1}(H-j)\delta_G(j).
\label{eq:finite-horizon-radii}
\end{equation}
For a row selector \(S_H\) with induced norm not exceeding one, the same
bounds apply to the selected stacked channel. Otherwise, the right-hand sides
in \eqref{eq:finite-horizon-radii} are multiplied by \(\norm{S_H}\).
The same construction in spectral norm gives a conservative
\(\epsilon_{G,2}\) for \eqref{eq:spectral-channel-radius}. These margins may
also be replaced by externally audited conservative bounds, but in either
case they are reported as separate certificate inputs rather than absorbed
into prediction loss.
\end{lemma}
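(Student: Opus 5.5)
The proof is a telescoping argument on products of matrices, followed by triangle-inequality assembly of the stacked blocks. Write \(A_\star=\widehat A+\Delta_A\), \(B_\star=\widehat B+\Delta_B\), \(C_\star=\widehat C+\Delta_C\), with \(\norm{\Delta_A}\leq\varepsilon_A\), \(\norm{\Delta_B}\leq\varepsilon_B\), \(\norm{\Delta_C}\leq\varepsilon_C\). Note that \(\norm{A_\star}\leq\norm{\widehat A}+\varepsilon_A\) and \(\norm{B_\star}\leq\norm{\widehat B}+\varepsilon_B\).

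The first step is to bound the power error \(\norm{\widehat A^j-A_\star^j}\) by \(D_A(j)\). I would use the telescoping identity
\[
A_\star^j-\widehat A^j=\sum_{\ell=0}^{j-1}\widehat A^{\,j-1-\ell}(A_\star-\widehat A)A_\star^{\ell},
\]
which can be checked by expanding, or by induction on \(j\) via \(A_\star^{j}-\widehat A^{j}=\widehat A(A_\star^{j-1}-\widehat A^{j-1})+(A_\star-\widehat A)A_\star^{j-1}\). Submultiplicativity applied to each term gives exactly \eqref{eq:DAj}, and the case \(j=0\) is trivial.

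The second step handles the \(j\)-step observability block \(C A^j\). I would split
\[
\widehat C\widehat A^j-C_\star A_\star^j=\widehat C(\widehat A^j-A_\star^j)+(\widehat C-C_\star)A_\star^j .
\]
Its norm is at most \(\norm{\widehat C}D_A(j)+\varepsilon_C(\norm{\widehat A}+\varepsilon_A)^j=\delta_O(j)\). For the Markov block \(CA^jB\), I would split
\[
\widehat C\widehat A^j\widehat B-C_\star A_\star^jB_\star=\widehat C\widehat A^j(\widehat B-B_\star)+(\widehat C\widehat A^j-C_\star A_\star^j)B_\star .
\]
This yields \(\norm{\widehat C}\norm{\widehat A}^j\varepsilon_B+\delta_O(j)(\norm{\widehat B}+\varepsilon_B)=\delta_G(j)\), matching \eqref{eq:deltaG}.

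The third step assembles the stacked matrices, and I expect it to be the main obstacle. The stacked observability matrix has block rows \(CA^j\) for \(j=0,\ldots,H-1\). The stacked Markov matrix is block lower-triangular Toeplitz, and the lag-\(j\) block appears on \(H-j\) block diagonals. I would write each stacked error as a sum of single-block embedding matrices, namely \(H-j\) placements of the lag-\(j\) block error. By the triangle inequality, together with the fact that embedding one block into a larger zero matrix does not increase the norm for the \(\infty\)-induced or spectral norms, the stacked error is bounded by \(\sum_j\delta_O(j)\), respectively \(\sum_j(H-j)\delta_G(j)\). This is conservative, since summing over block rows ignores the maximum-over-rows structure, but it is valid. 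The care needed is to state that the norm used is one for which zero-padding embeddings have norm one; both the \(\infty\)-norm and the spectral norm qualify.

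The remaining claims follow quickly. For the selected channel, \(\norm{S_H(\widehat G-G_\star)}\leq\norm{S_H}\norm{\widehat G-G_\star}\), which covers both the case \(\norm{S_H}\leq1\) and the scaled case. Repeating the entire argument with the spectral norm, which is submultiplicative, gives \(\epsilon_{G,2}\) for \eqref{eq:spectral-channel-radius}. The closing sentence about externally audited bounds is definitional and needs no proof.
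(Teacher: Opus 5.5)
Your proposal is correct and follows the paper's proof. Both use the same telescoping matrix-power identity with \(\norm{A_\star}\leq\norm{\widehat A}+\varepsilon_A\), the same two-term splittings of \(\widehat C\widehat A^j-C_\star A_\star^j\) and \(\widehat C\widehat A^j\widehat B-C_\star A_\star^jB_\star\), and the same triangle-inequality summation over the observability blocks and the \(H-j\) Toeplitz placements of each Markov block; your remarks that zero-padding block embeddings have unit induced norm (for the \(\infty\)- and spectral norms) and on the \(\norm{S_H}\) scaling only make explicit details the paper leaves implicit.
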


\begin{proof}
The matrix-power identity gives
\[
\widehat A^j-A_\star^j
=\sum_{\ell=0}^{j-1}\widehat A^{j-1-\ell}
(\widehat A-A_\star)A_\star^\ell .
\]
Using \(\norm{A_\star}\leq\norm{\widehat A}+\varepsilon_A\) yields
\(\norm{\widehat A^j-A_\star^j}\leq D_A(j)\). Expanding
\(\widehat C\widehat A^j-C_\star A_\star^j\) gives
\eqref{eq:deltaO}. Expanding
\(\widehat C\widehat A^j\widehat B-C_\star A_\star^jB_\star\) and using
\(\norm{B_\star}\leq\norm{\widehat B}+\varepsilon_B\) gives
\eqref{eq:deltaG}. Summing the observability blocks and the repeated
Toeplitz Markov-block offsets over the finite horizon gives
\eqref{eq:finite-horizon-radii}.
\end{proof}

\begin{theorem}[Episode-calibrated finite-sample ISS budget]
Split resettable episodes into identification data \({\mathcal D}_\Theta\),
output-map data \({\mathcal D}_C\), calibration episodes
\({\mathcal D}_{\rm cal}\), and deployment episodes. Freeze
\(\hat G_H,\hat O_H,S_H,H,\Lambda,\mu\), and
\({\mathcal U}_\Delta\) before calibration. For each calibration episode
\(i\), define the complete-episode score
\begin{equation}
S_i=\max_{1\leq k\leq K_{\max}}
\norm{S_H(\Delta Y_k^{(i)}
-\hat O_H\Delta z_{k,0}^{(i)}
-\hat G_H\Delta U_k^{(i)})}_\infty .
\label{eq:score}
\end{equation}
Let \(n_{\rm cal}=|{\mathcal D}_{\rm cal}|\), let
\(S_{(1)}\leq\cdots\leq S_{(n_{\rm cal})}\) be the ordered scores, and set
\begin{equation}
r_{\rm cal}=\left\lceil (n_{\rm cal}+1)(1-\beta_{\rm cal})\right\rceil,
\qquad
\widehat q_{1-\beta_{\rm cal}}=S_{(r_{\rm cal})},
\label{eq:conformal-rank}
\end{equation}
with the usual convention that certification fails if
\(r_{\rm cal}>n_{\rm cal}\). If Assumption~2 holds, and if the
parameter/channel, shift, projection, and numerical events hold with risks
\(\beta_{\rm par}\), \(\beta_{\rm shift}\), \(\beta_{\rm proj}\), and
\(\beta_{\rm num}\), then with probability at least
\[
1-\beta,\qquad
\beta=\beta_{\rm par}+\beta_{\rm cal}+\beta_{\rm shift}
+\beta_{\rm proj}+\beta_{\rm num},
\]
Theorem~1 holds for \(k\leq K_{\max}\) with
\begin{align}
\bar w_\beta={}&
\widehat q_{1-\beta_{\rm cal}}
+\varepsilon_{G,H,\beta}\bar U_\Delta
+\varepsilon_{O,H,\beta}\bar Z_0 \nonumber\\
&+L_\Omega h_{{\rm shift},\beta}
+\Delta_w+\Delta_{{\rm num},\beta}+\bar p_\beta .
\label{eq:hp-budget}
\end{align}
\end{theorem}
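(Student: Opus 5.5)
The argument reduces the probabilistic statement to the deterministic Theorem~1. I will exhibit an event of probability at least \(1-\beta\) on which the recursion of Lemma~2 holds with \(\bar w\) replaced by \(\bar w_\beta\) for all \(k\leq K_{\max}\). The five components of \(\beta\) map to five sub-events, combined by a union bound. These are: the conformal event \(\mathcal E_{\rm cal}\), the parameter/channel event \(\mathcal E_{\rm par}\), the shift event \(\mathcal E_{\rm shift}\), the projection event \(\mathcal E_{\rm proj}\), and the numerical event \(\mathcal E_{\rm num}\). On their intersection I then bound each term of \(r_k\) exactly as in the proof of Lemma~2.

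\textbf{Conformal event.} Let \(S_{\rm dep}\) be the complete-episode score (\ref{eq:score}) evaluated on the deployment episode. The controller and \(\hat G_H,\hat O_H,S_H\) are frozen before calibration, so the scores \(S_1,\ldots,S_{n_{\rm cal}},S_{\rm dep}\) are the same fixed measurable function of each episode. By Assumption~2 they are therefore exchangeable. The standard split-conformal rank argument then gives
\[
\Pr\bigl(S_{\rm dep}\leq S_{(r_{\rm cal})}\bigr)\geq 1-\beta_{\rm cal},
\]
where ties are handled by the usual convention and certification is declared failed when \(r_{\rm cal}>n_{\rm cal}\). Because the score is a maximum over \(k\leq K_{\max}\), this single event controls every trial simultaneously, with no union over \(k\). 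This is exactly why the score is defined at the episode level.

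\textbf{Decomposing the score residual.} On \(\mathcal E_{\rm cal}\) I need to relate the measured residual to the quantity \(\omega_k\) in Lemma~2. Write
\[
S_H\Delta Y_k-\hat O_H\Delta z_{k,0}-\hat G_H\Delta U_k
=\omega_k+(O_{\star,H}-\hat O_H)\Delta z_{k,0}+(G_{\star,H}-\hat G_H)\Delta U_k .
\]
The left-hand side is precisely the sum of the residual and channel-perturbation terms appearing in \(r_k\). The recursion can therefore be redone directly: \(\bar E_k=\Lambda\bar E_{k-1}+p_k+\rho_k\), with \(\rho_k\) the score residual. The request identity (\ref{eq:request}) is used as in Lemma~2, and it yields \(\|\rho_k\|_\infty\leq\widehat q_{1-\beta_{\rm cal}}\).

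Taken alone, this would even make the \(\varepsilon_{G,H,\beta}\bar U_\Delta\) and \(\varepsilon_{O,H,\beta}\bar Z_0\) terms redundant. The stated budget keeps them anyway. This is conservative and harmless: adding nonnegative terms preserves an upper bound. It is also convenient, because on \(\mathcal E_{\rm par}\) those terms are exactly the radii of Lemma~3 (finite-horizon perturbation radii). They are needed if the deployment channel differs from the calibration one through the shift, or if the score is understood as certifying only \(\omega_k\). I will present the bound via the triangle inequality so that either reading is covered.

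\textbf{Remaining terms.} The shift event supplies the term \(L_\Omega h_{{\rm shift},\beta}+\Delta_w\). Here \(L_\Omega\) is a Lipschitz constant of the residual with respect to the declared shift metric, and it accounts for any deviation of the deployment residual law from the calibration law. The projection event gives \(\|p_k\|_\infty\leq\bar p_\beta\); this follows via Proposition~2 together with the request-closure invariance of Proposition~4, which keeps \(b_k\in\mathcal B(R,\bar Z_0)\) for all \(k\) by Lemma~1. The numerical event adds \(\Delta_{{\rm num},\beta}\). On the intersection of all five events we get \(\|r_k\|_\infty\leq\bar w_\beta\) for \(k\leq K_{\max}\). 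The scalar comparison iteration in the proof of Theorem~1 then gives (\ref{eq:iss-bound}) with \(\bar w_\beta\).

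\textbf{Main obstacle.} The delicate step is justifying exchangeability of the scores together with projection invariance. The projection bound requires \(b_k\) to stay in the certified request set, and that in turn requires \(\|\bar E_{k-1}\|_\infty\leq R\). This is circular unless it is handled by induction on \(k\) within the good event. I would set it up as follows: the base case needs \(\|\bar E_0\|_\infty\leq R\); the inductive step uses Proposition~4 with \(\bar\xi\) equal to the non-projection part of \(\bar w_\beta\). I would also remark that the union bound is valid without any independence between the five events.
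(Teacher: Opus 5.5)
Your proposal is correct and has the same skeleton as the paper's proof: exchangeable complete-episode scores under the frozen controller, the split-conformal rank bound \(S_{\rm dep}\leq\widehat q_{1-\beta_{\rm cal}}\) holding simultaneously for all \(k\leq K_{\max}\), a union bound over the five events, and the scalar comparison of Theorem~1 on their intersection.

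The genuine difference is how the score is connected to \(r_k\). The paper says the parameter/channel event ``transfers'' the learned residual score to the true channel through \(\varepsilon_{G,H,\beta}\bar U_\Delta\) and \(\varepsilon_{O,H,\beta}\bar Z_0\). It then asserts that the certified deployment event holds with \(\bar w\leq\bar w_\beta\). Read literally, that step falls into one of two traps. Either it treats the score as a bound on \(\omega_k\) itself, which the definition \eqref{eq:score} does not give, since the score bounds \(\rho_k=\omega_k+(O_{\star,H}-\hat O_H)\Delta z_{k,0}+(G_{\star,H}-\hat G_H)\Delta U_k\). Or it counts the channel and reset perturbations twice: once when passing from the score to \(\norm{\omega_k}_\infty\) by the triangle inequality, and again when Lemma~2 adds \(\epsilon_G\bar U_\Delta+\epsilon_O\bar Z_0\).

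Your identity avoids both. It shows that the entire non-projection part of \(r_k\) is, up to sign, exactly \(\rho_k\), so \(\norm{r_k}_\infty\leq\widehat q_{1-\beta_{\rm cal}}+\bar p_\beta\) on the conformal and projection events. The \(\varepsilon\) terms in \(\bar w_\beta\) are then pure slack, and the stated budget is justified with room to spare. The paper's componentwise phrasing buys only the source-by-source bookkeeping of \eqref{eq:component-gains}, not a sharper or more general bound. One wording point: the bound on \(\rho_k\) comes from the conformal event, not from the request identity as your sentence suggests.

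On the projection term, rely on the hypothesized projection event with risk \(\beta_{\rm proj}\), as the paper does. Your request-closure induction is a legitimate optional strengthening, but it uses hypotheses absent from the statement:
\begin{itemize}
\item \(\norm{\bar E_0}_\infty\leq R\);
\item \(\norm{\Delta z_{k,0}}_\infty\leq\bar Z_0\) for all \(k\leq K_{\max}\);
\item \eqref{eq:request-closure} with \(\bar\xi\) equal to the non-projection part of \(\bar w_\beta\).
\end{itemize}
The last condition is not guaranteed by Algorithm~\ref{alg:iss-certification}, which checks closure before \(\widehat q_{1-\beta_{\rm cal}}\) is computed.
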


\begin{proof}
For a frozen controller, the calibration scores and the deployment score are
exchangeable at the complete-episode level. The split-conformal rank
argument with \eqref{eq:conformal-rank} gives
\(S_{\rm dep}\leq\widehat q_{1-\beta_{\rm cal}}\) with probability at least
\(1-\beta_{\rm cal}\). This bounds the residual budget rather than
establishing closed-loop stability directly. The parameter/channel event
transfers the learned residual score to the true selected channel through
\(\varepsilon_{G,H,\beta}\bar U_\Delta\) and
\(\varepsilon_{O,H,\beta}\bar Z_0\). The shift, disturbance, numerical, and
projection events add the remaining terms in \eqref{eq:hp-budget}. A union
bound over the listed events gives probability at least \(1-\beta\). On the
intersection of these events, Assumption~1 holds with
\(\bar w\leq\bar w_\beta\), so Theorem~1 applies.
\end{proof}

\begin{remark}[Budget interpretation]
The finite-sample control budget is
\begin{align*}
\bar w_\beta={}&
\widehat q_{1-\beta_{\rm cal}}+\bar p_\beta
+\varepsilon_{G,H,\beta}\bar U_\Delta
+\varepsilon_{O,H,\beta}\bar Z_0\\
&+L_\Omega h_{{\rm shift},\beta}
+\Delta_{{\rm num},\beta}+\Delta_w .
\end{align*}
The terms are, respectively, the complete-episode residual score, the
projection/reachability residual, the learned-channel perturbation, the
reset/observability perturbation, the deployment-shift allowance, the
solver tolerance, and the declared disturbance inflation.
The certified ultimate radius is accordingly
\[
\Delta_{{\rm ISS},\beta}=\frac{\bar w_\beta}{1-\lambda}.
\]
The learning gain \(\lambda\) controls amplification, whereas the
data-driven protocol controls the size of the certified input budget.
\end{remark}

The finite-sample theorem is used as an audit trail rather than as a
single statistical post-processing step. First, the residual score
\(\widehat q_{1-\beta_{\rm cal}}\) measures complete-episode mismatch after
the controller has been frozen. This term is not allowed to absorb later
changes in the row selector, horizon, learning gain, actuator limits, or
reference task. Second, \(\bar p_\beta\) records the portion of the requested
output increment that the admissible update set cannot produce through the
learned channel, a control obstruction rather than a statistical estimation
error. Third, \(\varepsilon_{G,H,\beta}\bar U_\Delta\) and
\(\varepsilon_{O,H,\beta}\bar Z_0\) transfer identification and reset
uncertainty into the same trial-axis recursion. Finally, the shift,
solver, and declared disturbance terms state what has been added for
deployment beyond the calibrated resettable protocol.

This ordering allows each certification failure to be traced to a specific
cause. If the
residual score is too large, the learned lifted predictor or the noise
level is not compatible with the desired band. If the channel margin is not
positive, the data do not certify an inverse IOS gain for the selected
channel, regardless of prediction loss. If the projection residual is large,
the requested correction is outside the constrained reachable set. If the
numerical or shift allowance dominates, the theorem is signaling that the
implementation protocol, rather than the nominal learning law, sets the
ultimate radius. The numerical section follows this same order, so each
figure and table is tied to a term in \(\bar w_\beta\).

\begin{remark}[IOS interpretation of the selected channel]
The map \(\Delta U_k\mapsto \Delta\bar Y_k\) is a finite-horizon
input-output channel. The lower bound \eqref{eq:channel-margin} and the
projection budget \eqref{eq:p-budget} state when this IOS channel can
generate the output increment required by the ISS recursion.
\end{remark}

\FloatBarrier
\section{Numerical Checks}

The numerical studies evaluate the quantities required by the ISS bound,
rather than ranking learning controllers by tracking performance. Each
experiment checks one quantity from the theory: the complete-episode residual
score, the selected-channel margin, the projection budget, input/update
feasibility, and ultimate-band coverage after the certified entry index.

Table~\ref{tab:system-protocol} summarizes the certified task protocols used
in the audits. All examples belong to Assumption~0: the plant maps are
unknown to the controller, the reference is repeated over a finite horizon,
and the learning update is certified only for selected output rows and the
declared bounded deployment protocol.

\begin{table*}[t]
\caption{System and Protocol Summary for the Certificate Audits}
\label{tab:system-protocol}
\centering
\scriptsize
\setlength{\tabcolsep}{2pt}
\begin{tabular}{p{0.17\textwidth}p{0.23\textwidth}p{0.09\textwidth}p{0.09\textwidth}p{0.25\textwidth}p{0.12\textwidth}}
\toprule
System & Dynamics class and protocol & \((n,m,p)\) & \(N,K_{\max}\) &
Uncertainty and constraints & Certified object\\
\midrule
Main Duffing closed-loop audit & nonlinear discrete-time repetitive &
\((2,1,1)\) & \(90,26\) &
repeated reference; reset, disturbance, noise; input/update bounds &
selected error\\
Hard Duffing budget audit & nonlinear repetitive &
\((2,1,1)\) & \(70,35\) &
repeated reference; reset, disturbance, noise; deployment shift &
selected error\\
Cubic damping oscillator & nonlinear repetitive &
\((2,1,1)\) & \(70,35\) &
repeated reference; reset, disturbance, noise; deployment shift &
selected error\\
Nonlinear servo & nonlinear repetitive &
\((2,1,1)\) & \(70,35\) &
repeated reference; reset, disturbance, noise; deployment shift &
selected error\\
\bottomrule
\end{tabular}
\end{table*}

The main Duffing closed-loop audit is used for ultimate-band coverage,
whereas the hard Duffing budget audit is used for the finite-sample
certificate budget.

Before comparing numerical values, every run is reduced to the same
certificate audit. This avoids treating tracking curves, residual plots, and
ablation rows as unrelated empirical evidence. Table~\ref{tab:certificate-map}
lists the proof object being tested, where it appears in the main text, and
the decision it supports. The table is also the reason the reported
quantities emphasize budgets, margins, and certified entry times rather than
only final tracking errors.

\begin{table*}[t]
\caption{Main-Text Evidence for the Input-to-State Stability Certificate}
\label{tab:certificate-map}
\centering
\scriptsize
\begin{tabular}{p{0.18\textwidth}p{0.34\textwidth}p{0.36\textwidth}}
\toprule
Certificate object & Main-text evidence & Certification role\\
\midrule
\(\widehat q_{1-\beta_{\rm cal}}\) and \(\bar w\) &
Figs.~\ref{fig:budget-stack}--\ref{fig:residual-band},
Tables~\ref{tab:finite-budget} and~\ref{tab:gain-sweep} &
Calibrates the ISS input budget\\
\(\Delta_{\rm ISS}=\bar w/(1-\lambda)\) &
Fig.~\ref{fig:control-traces}, Tables~\ref{tab:closed-loop}
and~\ref{tab:gain-sweep} &
Checks ultimate-band coverage\\
\(\gamma_G\) and channel informativeness &
Fig.~\ref{fig:weak-channel}, Tables~\ref{tab:weak-channel-audit}
and~\ref{tab:ablation-decisions} &
Rejects weak selected channels\\
\(\bar p\) and projection closure &
Fig.~\ref{fig:rejection-cascade}, Tables~\ref{tab:candidate-log}
and~\ref{tab:ablation-decisions} &
Tests reachability of requested updates\\
Input/update feasibility &
Fig.~\ref{fig:control-traces}, Table~\ref{tab:closed-loop} &
Checks admissible learning actions\\
\bottomrule
\end{tabular}
\end{table*}

The finite-sample audit is generated from the retained experiment summaries
by the checker scripts in the code archive. Fig.~\ref{fig:budget-stack}
shows the additive terms in \(\bar w_\beta\), while
Fig.~\ref{fig:conformal-coverage} reports the episode-level conformal
coverage check used for the residual term. Here \(n_{\rm cal}\) denotes the
number of independently generated complete calibration episodes. Each
episode consists of the reset state, disturbance realization, sensor noise,
actuator mismatch, and the full closed-loop trajectory generated by the
frozen controller; it is not the number of within-episode time samples.
These calibration episodes are Monte Carlo resettable simulation episodes
used to audit the finite-sample certificate. Physical deployment with fewer
episodes would yield a larger conformal quantile and hence a more
conservative certified band.

\begin{figure}[t]
\centering
\includegraphics[width=\linewidth]{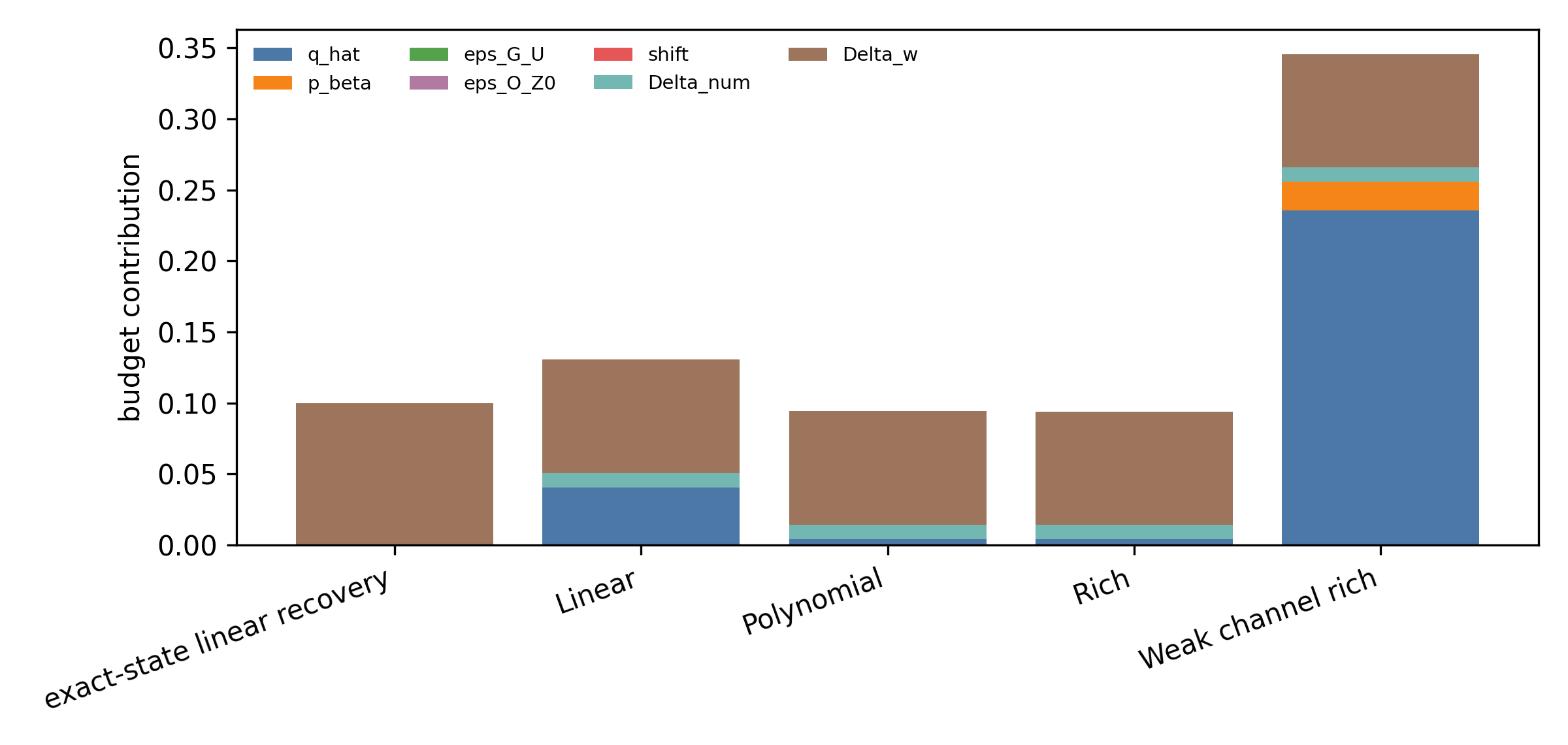}
\caption{Finite-sample ISS budget decomposition. Each bar is the sum of the
episode residual score, projection term, channel and reset margins, shift
allowance, numerical tolerance, and declared disturbance inflation.}
\label{fig:budget-stack}
\end{figure}

\begin{figure}[t]
\centering
\includegraphics[width=\linewidth]{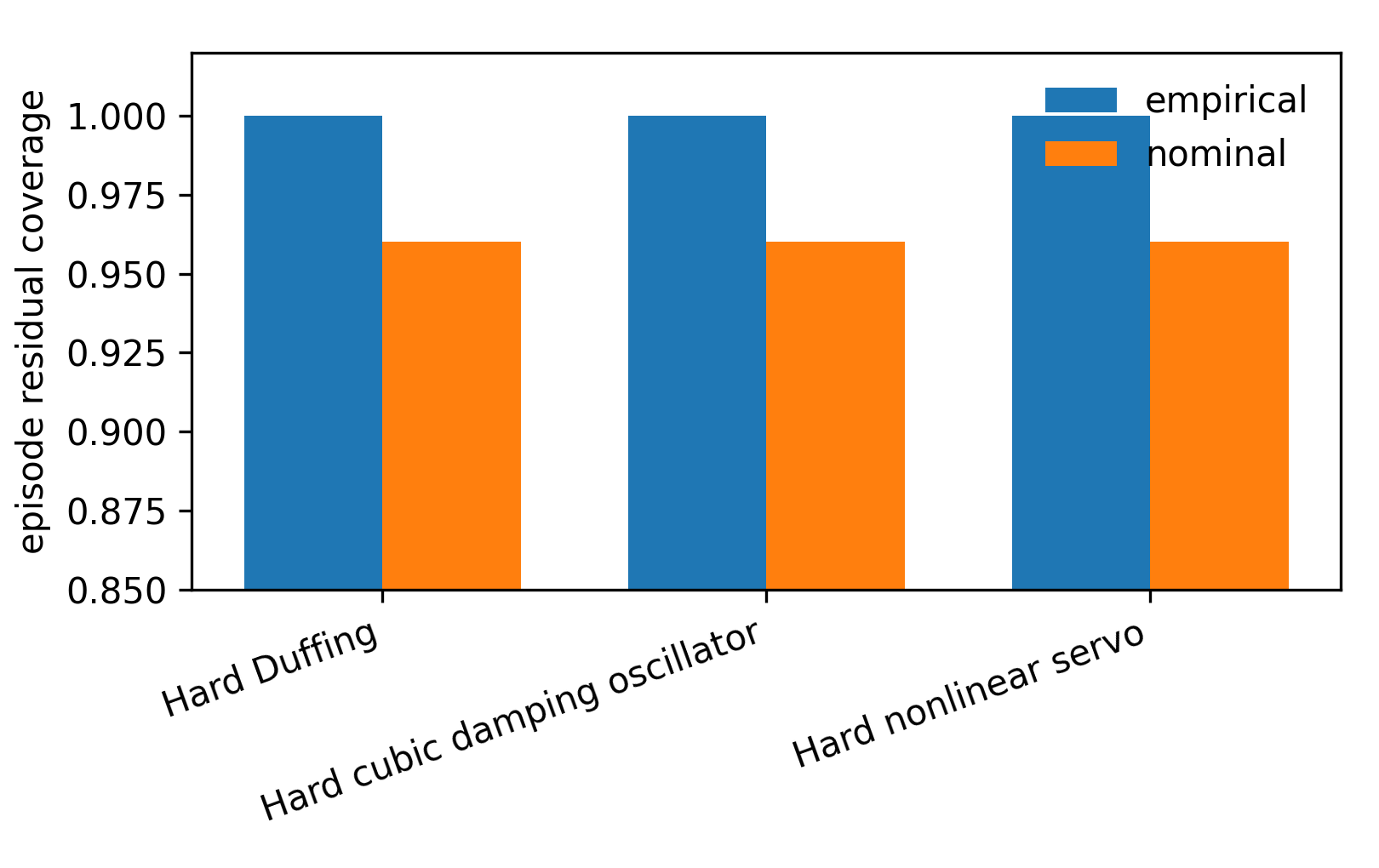}
\caption{Episode-level split-conformal residual audit. The controller is
frozen before calibration, and the rank
\(r_{\rm cal}=\lceil(n_{\rm cal}+1)(1-\beta_{\rm cal})\rceil\) determines
the reported residual quantile.}
\label{fig:conformal-coverage}
\end{figure}

\begin{table*}[t]
\caption{Finite-Sample ISS Budget Audit from the Generated CSV}
\label{tab:finite-budget}
\centering
\scriptsize
\begin{tabular}{lcccccccc}
\toprule
Case & \(n_{\rm cal}\) episodes & \(r_{\rm cal}\) & \(\widehat q\) &
\(p_\beta\) & \(G/O\) & shift/num/\(\Delta_w\) &
\(\bar w_\beta\) & Band\\
\midrule
Hard Duffing valid-channel & 120000 & 115201 & 0.330 & 0.000 &
0.106 & 0.152 & 0.588 & 1.110\\
Hard cubic damping valid-channel & 120000 & 115201 & 0.209 & 0.000 &
0.069 & 0.134 & 0.412 & 0.778\\
Hard nonlinear servo valid-channel & 120000 & 115201 & 0.111 & 0.000 &
0.122 & 0.168 & 0.402 & 0.758\\
\bottomrule
\end{tabular}
\end{table*}

Negative cases are interpreted accordingly: a row that violates a hypothesis
before the ISS recursion is invoked is treated as a certificate rejection
rather than a failed simulation. In particular, a weak selected channel may still yield accurate held-out
predictions, but it cannot support the certified projection step used to
convert the requested output correction into an admissible input update.
Conversely, conservatively inflating \(\bar w\) certifies the same
controller under a larger admissible input budget, enlarging the ultimate
band through the ISS gain.

The first experiment is a two-state Duffing-type nonlinear repetitive plant
identified only through the learned Koopman model. The selected output is
the tracked second state; the deployment includes reset variation,
bounded disturbance, input limits, and a finite learning horizon. The
closed-loop trace in Fig.~\ref{fig:control-traces} shows the controlled
state, output, input, and update magnitude. Table~\ref{tab:closed-loop}
lists the quantities used by the ISS certificate.

\begin{figure}[t]
\centering
\includegraphics[width=\linewidth]{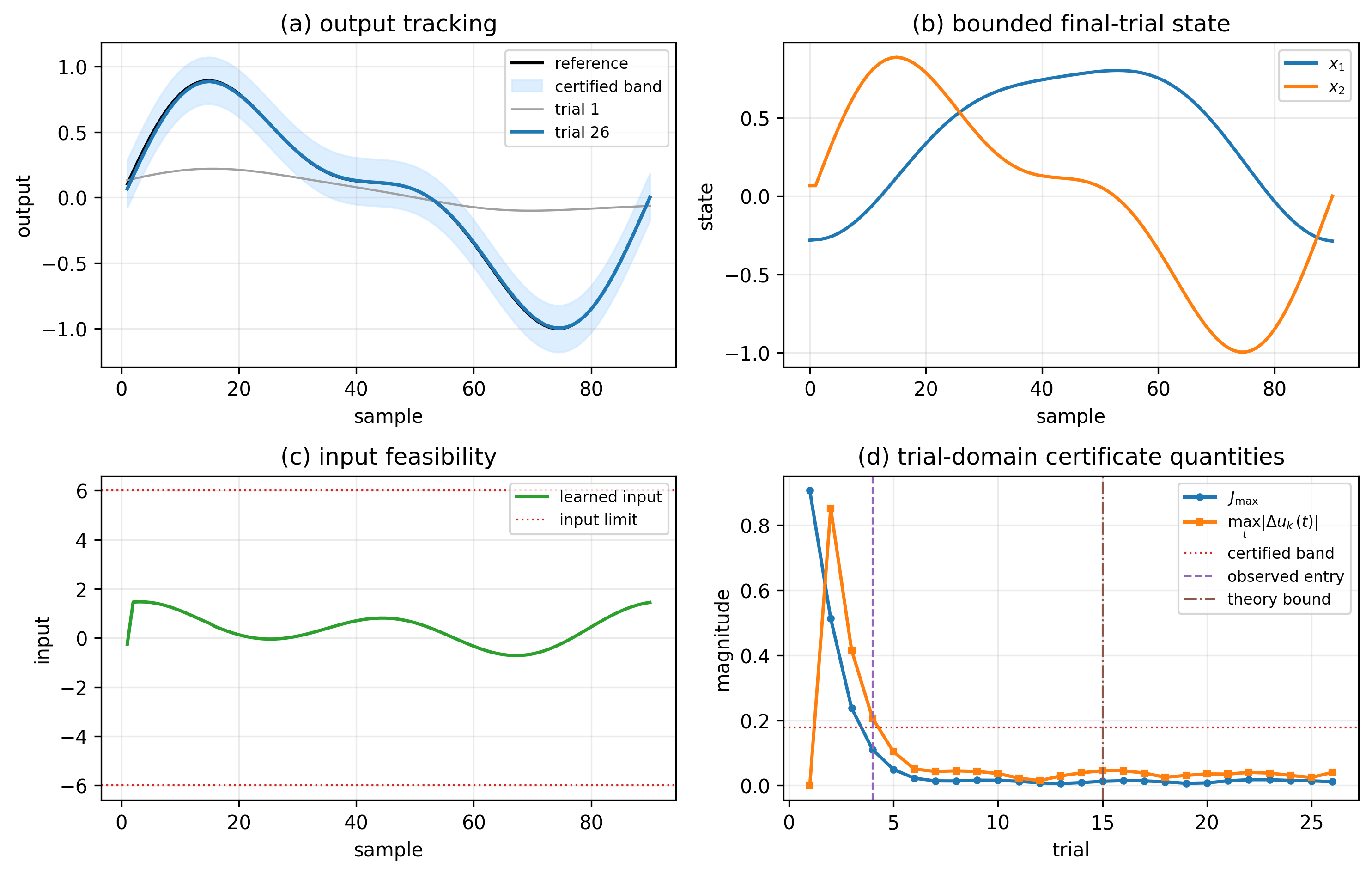}
\caption{Closed-loop state, output, input, and input-update traces for the
residual-certified Koopman learning controller. The certified ultimate band
is the ISS gain applied to the calibrated perturbation budget.}
\label{fig:control-traces}
\end{figure}

\begin{table}[t]
\caption{Closed-Loop ISS Certificate Quantities in the Main Duffing Run}
\label{tab:closed-loop}
\centering
\scriptsize
\begin{tabular}{lcc}
\toprule
Quantity & Value & Role\\
\midrule
Certified band \(\Delta_{\rm ISS}\) & 0.1777 & ultimate ISS bound\\
Observed entry trial & 4 & empirical finite-entry index\\
Theoretical entry bound & 15 & certified \(k^\star\)\\
Final \(J_{\max}\) & 0.0113 & terminal selected error\\
Post-entry max \(J_{\max}\) & 0.1102 & inside certified band\\
Max \(|u|\) & 1.6002 & input feasibility\\
Max \(|\Delta u|\) & 0.8520 & update feasibility\\
\bottomrule
\end{tabular}
\end{table}

The second experiment varies residual and uncertainty levels. In the ISS
view, this sweep serves as a gain test rather than a conventional
sensitivity analysis. Increasing
measurement noise, disturbance, or model mismatch increases the certified
budget \(\bar w\), and Theorem~1 predicts a monotone enlargement of the
ultimate band. Table~\ref{tab:gain-sweep} reports representative rows from
the generated sweep. Fig.~\ref{fig:residual-band} gives the corresponding
residual-to-band validation.

\begin{figure}[t]
\centering
\includegraphics[width=\linewidth]{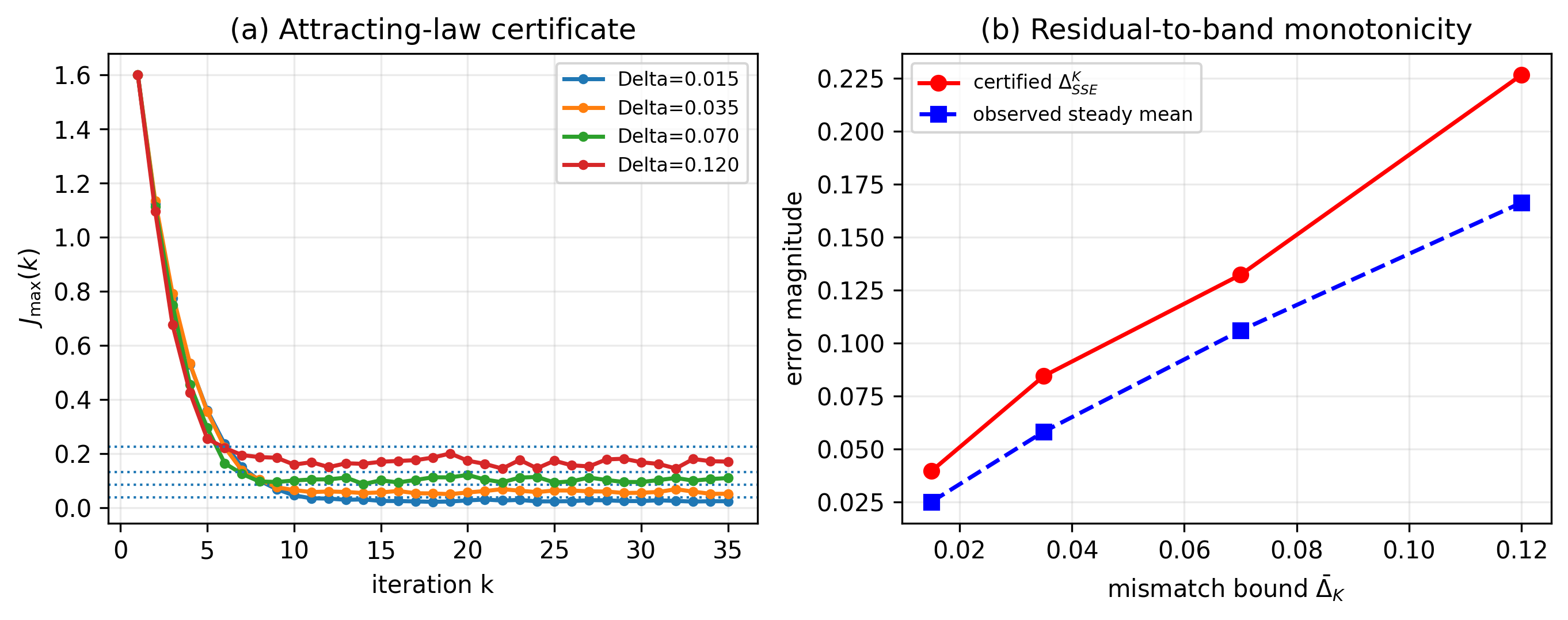}
\caption{Residual-to-band validation. The certified band grows with the
aggregate perturbation budget, which is the empirical counterpart of the
ISS gain \(\gamma(s)=s/(1-\lambda)\).}
\label{fig:residual-band}
\end{figure}

\begin{table*}[t]
\caption{Representative ISS-Gain Sweep Rows from \texttt{iss\_gain\_sweep.csv}}
\label{tab:gain-sweep}
\centering
\scriptsize
\begin{tabular}{llccccccccc}
\toprule
Family & Level & \(\widehat q\) & \(p_\beta\) & channel & reset &
other & \(\bar w\) & Band & \(\rho\) & \(k^\star\)\\
\midrule
disturbance & 0.005 & 0.001 & 0.000 & 0.000 & 0.000 &
0.090 & 0.091 & 0.172 & 0.10 & 15\\
disturbance & 0.090 & 0.026 & 0.000 & 0.000 & 0.000 &
0.260 & 0.286 & 0.540 & 0.10 & 13\\
noise & 0.00 & 0.004 & 0.000 & 0.000 & 0.000 &
0.110 & 0.114 & 0.216 & 0.10 & 14\\
noise & 0.10 & 0.038 & 0.000 & 0.000 & 0.000 &
0.210 & 0.248 & 0.469 & 0.10 & 13\\
mismatch & 0.00 & 0.004 & 0.000 & 0.000 & 0.000 &
0.110 & 0.114 & 0.215 & 0.10 & 15\\
mismatch & 0.22 & 0.005 & 0.000 & 0.000 & 0.000 &
0.330 & 0.335 & 0.631 & 0.10 & 13\\
\bottomrule
\end{tabular}
\end{table*}

These rows are reported because they correspond directly to
\eqref{eq:wbar} and \eqref{eq:iss-bound}: changing disturbance, noise, or
mismatch changes a certified input budget, and the ultimate radius changes
through the same scalar gain \(1/(1-\lambda)\). The table therefore reports
\(\bar w\), the band, and \(k^\star\), rather than only terminal tracking
error.

The third experiment tests control-channel certification.
Fig.~\ref{fig:weak-channel} compares learned models that differ in the
margin of the selected finite-horizon channel. The weak-channel case is
rejected because the selected channel fails the margin and projection
conditions required by the ISS certificate.

\begin{figure}[t]
\centering
\includegraphics[width=\linewidth]{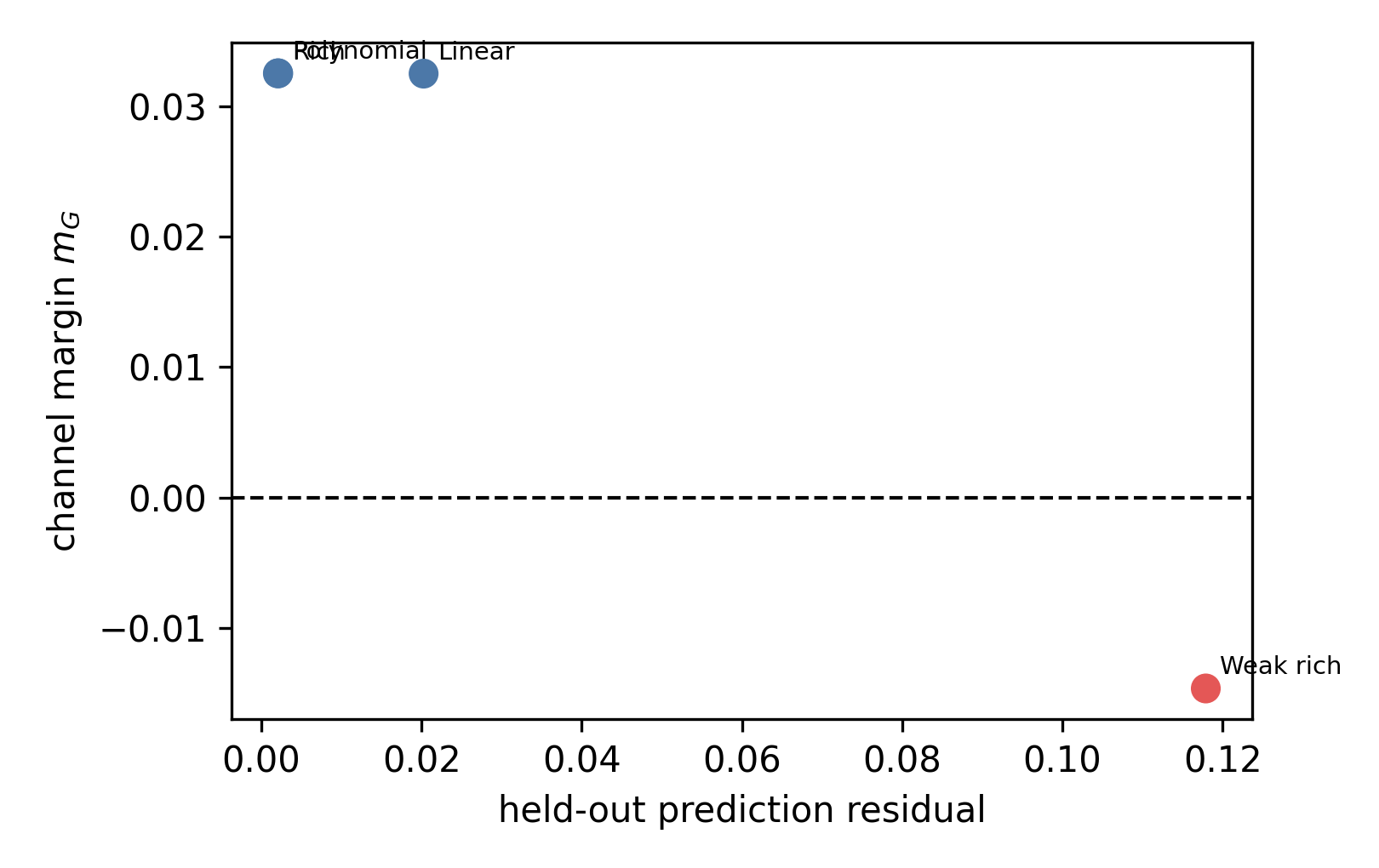}
\caption{Selected-channel certification ablation. The weak-channel case is
rejected because the selected input-output channel does not pass the
lower-margin test, even though residual size is also reported.}
\label{fig:weak-channel}
\end{figure}

\begin{table}[t]
\caption{Weak-Channel Audit}
\label{tab:weak-channel-audit}
\centering
\scriptsize
\begin{tabular}{lcccc}
\toprule
Case & residual & \(\sigma_{\min}\) & \(m_G\) & Cert.\\
\midrule
Linear dictionary & 0.020 & 0.057 & 0.032 & yes\\
Polynomial dictionary & 0.002 & 0.058 & 0.033 & yes\\
Rich dictionary & 0.002 & 0.058 & 0.033 & yes\\
Weak-channel rich & 0.118 & 0.010 & -0.015 & no\\
\bottomrule
\end{tabular}
\end{table}

Candidate rejection is logged before the deterministic ISS theorem is
invoked. Fig.~\ref{fig:rejection-cascade} and
Table~\ref{tab:candidate-log} show that the implemented protocol contains
both margin-failure and request-closure-failure cases, plus certified
accepted cases.

\begin{figure}[t]
\centering
\includegraphics[width=\linewidth]{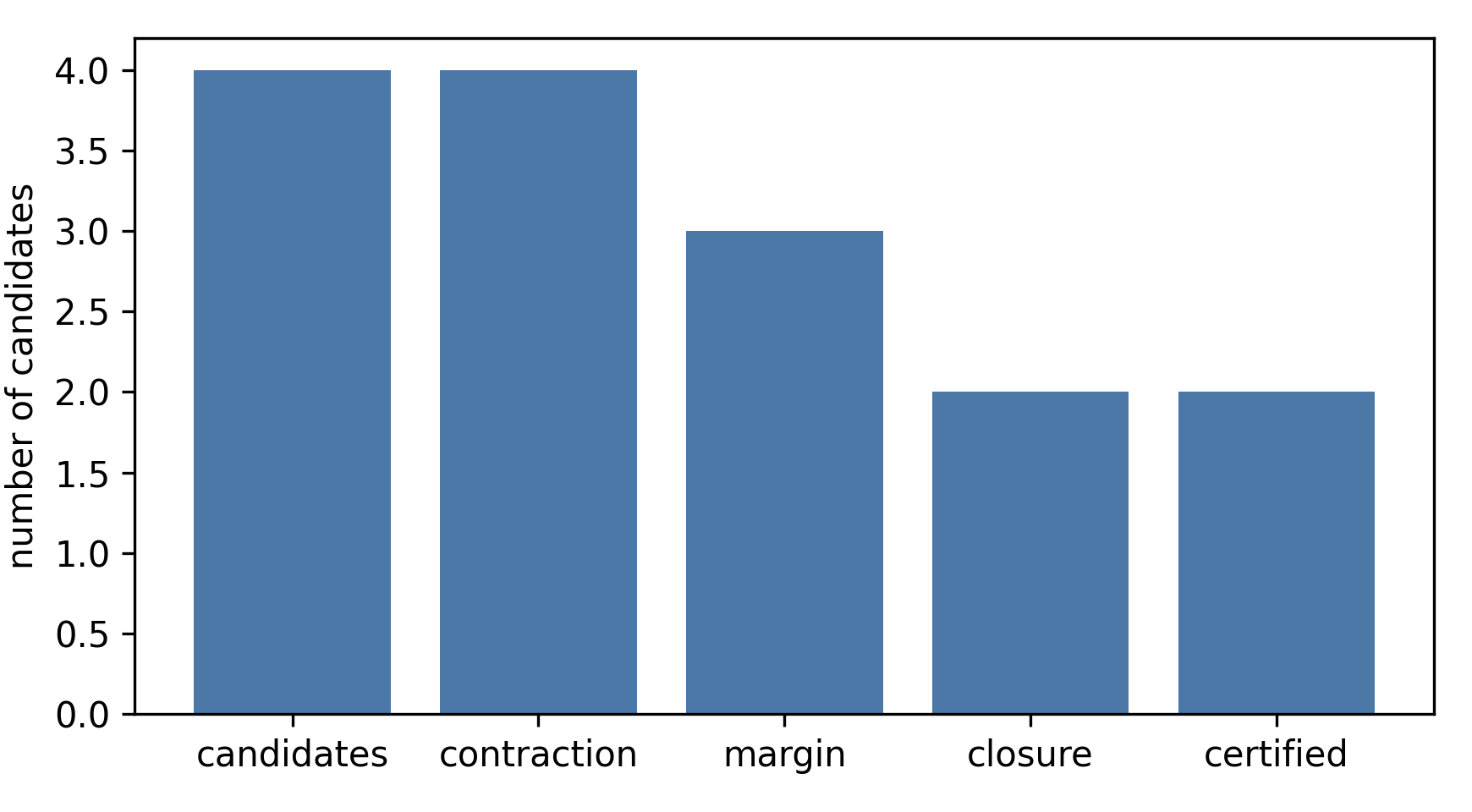}
\caption{Candidate rejection cascade. The certificate rejects candidates at
the first failed gate: contraction, channel margin, request closure, or
ultimate-band requirement.}
\label{fig:rejection-cascade}
\end{figure}

\begin{table}[t]
\caption{Candidate Rejection Log}
\label{tab:candidate-log}
\centering
\scriptsize
\begin{tabular}{lcccc}
\toprule
ID & \(m_G\) & \(P_R\) & closure/R & Decision\\
\midrule
q1 & -0.015 & 0.020 & 0.61 & margin fail\\
q2 & 0.033 & 0.310 & 1.07 & request-closure fail\\
q3 & 0.033 & 0.030 & 0.72 & certified\\
q4 & 0.033 & 0.000 & 0.58 & certified\\
\bottomrule
\end{tabular}
\end{table}

The baseline ablation is best understood as a set of certificate decisions,
not as a ranking of controller names. Table~\ref{tab:ablation-decisions}
keeps the same closed-loop task and asks which ingredients close the ISS
argument. The nominal rich Koopman controller has a useful learned predictor,
but the reported certificate sets the residual input to zero; the observed
trajectory therefore violates the claimed zero-residual band even though the
same predictor becomes certifiable once the calibrated residual budget is
inserted. The inflated-budget row uses the same controller again, showing the
expected ISS tradeoff: the certified band expands when the input budget is
made more conservative, while the trajectory is still covered. The
weak-channel row is central to the paper's argument: although its residual
budget is large, the decisive failure is the channel margin, which blocks the
projection step required by Assumption~1. External LQR-ILC, neural ILC, and Koopman-MPC baselines are useful
tracking-performance comparators, but they do not provide input-to-state
stability certificates unless they report the same fixed channel, projection
residual, and perturbation-budget quantities.

\begin{table}[t]
\caption{Ablation Rows Read as ISS Decisions}
\label{tab:ablation-decisions}
\centering
\scriptsize
\begin{tabular}{p{0.38\linewidth}ccc}
\toprule
Case & Budget & Band & ISS cert.\\
\midrule
Linear proxy baseline & 0.1305 & 0.2463 & yes\\
Nominal rich, zero residual & 0.0000 & 0.0001 & no\\
Residual-certified rich & 0.0941 & 0.1777 & yes\\
Inflated-budget rich & 0.1647 & 0.3109 & yes\\
Weak-channel rich & 0.3458 & 0.6525 & no\\
\bottomrule
\end{tabular}
\end{table}

The numerical evidence is deliberately scoped. It verifies the algebraic
objects used by the proofs: residual score, channel lower bound, projection
budget, input/update feasibility, and ultimate-band coverage after the
certified entry index. Additional benchmark rows and checker outputs are
included in the data and code archive.

\FloatBarrier
\section{Discussion}

The guarantee is finite-horizon and task-protocol dependent. The row
selector \(S_H\), horizon \(H\), learning matrix \(\Lambda\), admissible
update set, and calibration score are fixed before deployment. Within this
certified setting, the learning-error state decays geometrically from its
initial value, and all uncertainty sources enter through the computable
budget \(\bar w\). In this sense, the result constitutes a practical ISS
certificate rather than a universal ISS theorem for arbitrary learned
nonlinear controllers.

The result also separates three quantities that are often conflated in
data-driven control: model fit, finite-horizon IOS channel informativeness,
and trial-axis ISS of the learning recursion. The weak-channel case is
therefore rejected through non-certifiability of the selected channel, not
through a closed-loop instability claim.

The finite-sample statement has the same boundary. Split conformal
calibration supplies a complete-episode residual budget under an
exchangeable resettable protocol and a frozen controller. It does not by
itself assert stability, and it does not apply after the reset law,
disturbance law, actuator limits, or reference task changes. Stability is
obtained by applying the deterministic ISS theorem on the event where the
calibrated budget, channel margin, projection budget, and deployment-shift
margin hold.

\section{Conclusion}

We have formulated Koopman learning control of unknown nonlinear repetitive
systems as a practical input-to-state stability certification problem along
the learning-trial axis. The deterministic result gives an explicit
ultimate-band estimate, while the finite-sample implementation constructs the
corresponding perturbation budget from complete-episode residuals, channel
and reset margins, projection residuals, deployment-shift allowances, and
numerical tolerances. The construction shows that Koopman learning control
can be certified only when model residuals, projection residuals, and
selected-channel margins are reported as separate stability quantities.
Numerical checks support the predicted residual-to-band scaling and the
rejection of weak selected channels by non-certifiability, without claiming
that such predictors are inaccurate or that all closed-loop trajectories are
unstable.

\section*{Data and code availability}

The data and code supporting the numerical results are available in the
public GitHub repository associated with this submission:
\href{https://github.com/marcowus/projection-residual-iss-koopman-learning-control}{\nolinkurl{https://github.com/marcowus/projection-residual-iss-koopman-learning-control}}.
The archive contains the reproducibility package used to generate the
figures, tables, verification CSV files, and checker outputs
reported in this article.

\section*{Funding}

This work was supported by the Key Research and Development Program of
Shaanxi Province, Four-Chain Integration Major Project, ``Research and
development of key technologies and equipment for wall-working construction
robots based on digital twins'' [grant number 2024PT-ZCK-77; project period:
2024.10.1--2026.9.30]. The funding source had no role in the study design,
data collection and analysis, manuscript preparation, or decision to submit
the article for publication.

\section*{Declaration of competing interest}

The authors declare that they have no known competing financial interests or
personal relationships that could have appeared to influence the work
reported in this article.

\section*{CRediT authorship contribution statement}

Yue Wu: Conceptualization, Methodology, Software, Validation, Formal
analysis, Writing--original draft. Ye Cao: Investigation, Validation,
Visualization, Writing--review and editing. Jianfu Cao: Supervision, Funding
acquisition, Project administration, Writing--review and editing.

\section*{Declaration of generative AI and AI-assisted technologies in the manuscript preparation process}

During the preparation of this work, the authors used ChatGPT for language
editing and translation support. The authors reviewed, revised, and verified
all mathematical statements, experimental descriptions, and conclusions, and
take full responsibility for the final manuscript.


\begingroup\scriptsize
\begin{thebibliography}{99}
\setlength{\itemsep}{0pt}

\bibitem{sontag1989}
E. D. Sontag, ``Smooth stabilization implies coprime factorization,''
\emph{IEEE Transactions on Automatic Control}, vol. 34, no. 4, pp. 435--443,
1989, doi: 10.1109/9.28018.

\bibitem{sontagwang1995}
E. D. Sontag and Y. Wang, ``On characterizations of the input-to-state
stability property,'' \emph{Systems \& Control Letters}, vol. 24, no. 5,
pp. 351--359, 1995, doi: 10.1016/0167-6911(94)00050-6.

\bibitem{jiangteelpraly1994}
Z.-P. Jiang, A. R. Teel, and L. Praly, ``Small-gain theorem for ISS systems
and applications,'' \emph{Mathematics of Control, Signals and Systems},
vol. 7, no. 2, pp. 95--120, 1994, doi: 10.1007/BF01211469.

\bibitem{jiangwang2001}
Z.-P. Jiang and Y. Wang, ``Input-to-state stability for discrete-time
nonlinear systems,'' \emph{Automatica}, vol. 37, no. 6, pp. 857--869, 2001,
doi: 10.1016/S0005-1098(01)00028-0.

\bibitem{angelisontagwang2000}
D. Angeli, E. D. Sontag, and Y. Wang, ``A characterization of integral
input-to-state stability,'' \emph{IEEE Transactions on Automatic Control},
vol. 45, no. 6, pp. 1082--1097, 2000, doi: 10.1109/9.863594.

\bibitem{angeli2002}
D. Angeli, ``A Lyapunov approach to incremental stability properties,''
\emph{IEEE Transactions on Automatic Control}, vol. 47, no. 3, pp. 410--421,
2002, doi: 10.1109/9.989067.

\bibitem{dashkovskiy2010}
S. N. Dashkovskiy, B. S. R\"uffer, and F. R. Wirth, ``Small gain theorems
for large scale systems and construction of ISS Lyapunov functions,''
\emph{SIAM Journal on Control and Optimization}, vol. 48, no. 6,
pp. 4089--4118, 2010, doi: 10.1137/090746483.

\bibitem{mironchenko2018}
A. Mironchenko and F. Wirth, ``Characterizations of input-to-state stability
for infinite-dimensional systems,'' \emph{IEEE Transactions on Automatic
Control}, vol. 63, no. 6, pp. 1692--1707, 2018, doi:
10.1109/TAC.2017.2756341.

\bibitem{arimoto1984}
S. Arimoto, S. Kawamura, and F. Miyazaki, ``Bettering operation of robots by
learning,'' \emph{Journal of Robotic Systems}, vol. 1, no. 2, pp. 123--140,
1984, doi: 10.1002/rob.4620010203.

\bibitem{moore1993}
K. L. Moore, \emph{Iterative Learning Control for Deterministic Systems}.
London, U.K.: Springer-Verlag, 1993, doi: 10.1007/978-1-4471-1912-8.

\bibitem{amann1996}
N. Amann, D. H. Owens, and E. Rogers, ``Iterative learning control using
optimal feedback and feedforward actions,'' \emph{International Journal of
Control}, vol. 65, no. 2, pp. 277--293, 1996, doi:
10.1080/00207179608921697.

\bibitem{gao2001robust}
F. Gao, Y. Yang, and C. Shao, ``Robust iterative learning control with
applications to injection molding process,'' \emph{Chemical Engineering
Science}, vol. 56, no. 24, pp. 7025--7034, 2001, doi:
10.1016/S0009-2509(01)00339-6.

\bibitem{bristow2006survey}
D. A. Bristow, M. Tharayil, and A. G. Alleyne, ``A survey of iterative
learning control,'' \emph{IEEE Control Systems Magazine}, vol. 26, no. 3,
pp. 96--114, 2006, doi: 10.1109/MCS.2006.1636313.

\bibitem{koopman1931}
B. O. Koopman, ``Hamiltonian systems and transformation in Hilbert space,''
\emph{Proceedings of the National Academy of Sciences of the United States
of America}, vol. 17, no. 5, pp. 315--318, 1931, doi:
10.1073/pnas.17.5.315.

\bibitem{rowley2009}
C. W. Rowley, I. Mezi\'c, S. Bagheri, P. Schlatter, and D. S. Henningson,
``Spectral analysis of nonlinear flows,'' \emph{Journal of Fluid
Mechanics}, vol. 641, pp. 115--127, 2009, doi:
10.1017/S0022112009992059.

\bibitem{williams2015}
M. O. Williams, I. G. Kevrekidis, and C. W. Rowley, ``A data-driven
approximation of the Koopman operator: Extending dynamic mode
decomposition,'' \emph{Journal of Nonlinear Science}, vol. 25, no. 6,
pp. 1307--1346, 2015, doi: 10.1007/s00332-015-9258-5.

\bibitem{brunton2016}
S. L. Brunton, B. W. Brunton, J. L. Proctor, and J. N. Kutz, ``Koopman
invariant subspaces and finite linear representations of nonlinear
dynamical systems for control,'' \emph{PLOS ONE}, vol. 11, no. 2, e0150171,
2016, doi: 10.1371/journal.pone.0150171.

\bibitem{proctor2016}
J. L. Proctor, S. L. Brunton, and J. N. Kutz, ``Dynamic mode decomposition
with control,'' \emph{SIAM Journal on Applied Dynamical Systems}, vol. 15,
no. 1, pp. 142--161, 2016, doi: 10.1137/15M1013857.

\bibitem{korda2018}
M. Korda and I. Mezi\'c, ``Linear predictors for nonlinear dynamical
systems: Koopman operator meets model predictive control,''
\emph{Automatica}, vol. 93, pp. 149--160, 2018, doi:
10.1016/j.automatica.2018.03.046.

\bibitem{mauroy2020}
A. Mauroy and J. Goncalves, ``Koopman-based lifting techniques for nonlinear
systems identification,'' \emph{IEEE Transactions on Automatic Control},
vol. 65, no. 6, pp. 2550--2565, 2020, doi: 10.1109/TAC.2019.2941433.

\bibitem{nueske2023}
F. N\"uske, S. Peitz, F. Philipp, M. Schaller, and K. Worthmann, ``Finite
data error bounds for Koopman-based prediction and control,''
\emph{Journal of Nonlinear Science}, vol. 33, no. 1, 2023, doi:
10.1007/s00332-022-09862-1.

\bibitem{dean2020}
S. Dean, H. Mania, N. Matni, B. Recht, and S. Tu, ``On the sample complexity
of the linear quadratic regulator,'' \emph{Foundations of Computational
Mathematics}, vol. 20, no. 4, pp. 633--679, 2020, doi:
10.1007/s10208-019-09426-y.

\bibitem{depersis2020}
C. De Persis and P. Tesi, ``Formulas for data-driven control:
Stabilization, optimality, and robustness,'' \emph{IEEE Transactions on
Automatic Control}, vol. 65, no. 3, pp. 909--924, 2020, doi:
10.1109/TAC.2019.2959924.

\bibitem{berberich2021}
J. Berberich, J. K{\"o}hler, M. A. M{\"u}ller, and F. Allg{\"o}wer,
``Data-driven model predictive control with stability and robustness
guarantees,'' \emph{IEEE Transactions on Automatic Control}, vol. 66,
no. 4, pp. 1702--1717, 2021, doi: 10.1109/TAC.2020.3000182.

\bibitem{vovk2005}
V. Vovk, A. Gammerman, and G. Shafer, \emph{Algorithmic Learning in a
Random World}. New York, NY, USA: Springer, 2005, doi: 10.1007/b106715.

\bibitem{lei2018}
J. Lei, M. G'Sell, A. Rinaldo, R. Tibshirani, and L. Wasserman,
``Distribution-free predictive inference for regression,'' \emph{Journal
of the American Statistical Association}, vol. 113, no. 523, pp. 1094--1111,
2018, doi: 10.1080/01621459.2017.1307116.

\end{thebibliography}
\endgroup
\end{document}